\documentclass{article}

\usepackage[T1]{fontenc}
\usepackage{lmodern}
\usepackage{rsfso}

\usepackage{graphicx}
\usepackage{adjustbox}
\usepackage{pdflscape}
\usepackage{multirow}
\usepackage{amsmath,amssymb,amsfonts}
\usepackage{amsthm}
\usepackage[title]{appendix}
\usepackage{xcolor}
\usepackage{textcomp}
\usepackage{manyfoot}
\usepackage{booktabs}
\usepackage{longtable}
\usepackage{algorithm}
\usepackage{algorithmicx}
\usepackage{algpseudocode}
\usepackage{authblk}
\usepackage{listings}
\usepackage{enumitem}
\usepackage{multicol}
\usepackage{makecell}
\usepackage{mathtools}
\usepackage{placeins}
\usepackage{microtype}
\usepackage{xurl}
\usepackage[
  unicode,
  pdfencoding=auto,
  psdextra,
  hidelinks
]{hyperref}
\usepackage{cleveref}
\usepackage{subcaption}

\lstdefinestyle{prompt}{
  basicstyle=\ttfamily\footnotesize,
  breaklines=true,
  breakatwhitespace=false,
  columns=fullflexible,
  keepspaces=true,
  showstringspaces=false,
  upquote=true
}

\theoremstyle{plain}
\newtheorem{theorem}{Theorem}

\theoremstyle{definition}

\date{}
\theoremstyle{remark}

\title{Improved lower bounds for the Shannon capacity of odd cycles}
\author[1]{Nathaniel Itty\thanks{Email: \texttt{nathanielitty@gmail.com}}}
\author[2]{Christopher D. Rosin\thanks{Email: \texttt{christopher.rosin@gmail.com}}}
\author[1]{Chase Carstensen}
\author[1]{Daniel Reichman}

\affil[1]{Worcester Polytechnic Institute}
\affil[2]{Constructive Codes}

\begin{document}
\maketitle

\begin{abstract}
The Shannon capacity $\Theta(G)$ of a graph $G$ quantifies the maximum rate at which information can be transmitted with zero error over a noisy channel. It is lower bounded by $\alpha(G^d)^{1/d}$ for any $d$, where $\alpha(G^d)$ is the independence number of the $d$-th strong product of $G$.
We construct independent sets of size $134753$ in $C_7^{10}$, $21909$ in $C_{11}^{6}$, $62530$ in $C_{13}^{6}$, and $8076974$ in $C_{15}^{8}$, improving the best known lower bounds for the Shannon capacity of these graphs to $\Theta(C_7)\geq 134753^{1/10}>3.258020$, $\Theta(C_{11})\geq 21909^{1/6}>5.289773$, $\Theta(C_{13})\geq 62530^{1/6}>6.300109$, and $\Theta(C_{15})\geq 8076974^{1/8}>7.301399$. We also improve the best known lower bounds on the independence numbers of several individual strong products of odd cycles that do not improve the Shannon capacity lower bound. The constructions were discovered through iterative interactions with a Large Language Model (LLM), illustrating the potential of LLMs for finding explicit combinatorial constructions.
\end{abstract}

\section{Introduction}
Let $G=(V,E)$ be an undirected graph. The strong $d$-product of $G$, denoted by $G^d$ has vertex-set $V^d$. Two distinct vertices $(u_1,\ldots u_d)$ and $(v_1,\ldots v_d)$ in $V^d$ are connected by an edge in $G^d$ if and only if for every $i, u_i=v_i$ or $(u_i,v_i) \in E$. Recall that for a graph $H,$ $\alpha(H)$ is the size of the largest independent set in $H$. The \emph{Shannon Capacity} of $G$, denoted $\Theta(G)$  is defined as
\[\Theta(G)=\sup_{d \in \mathbb{N}}\alpha(G^d)^{1/d}.\]

It is well known that for any two positive integers $d_1,d_2$, $\alpha(G^{d_1+d_2}) \geq \alpha(G^{d_1})\alpha(G^{d_2})$, which implies that
$\Theta(G)=\lim_{d \rightarrow \infty}\sqrt[d]{\alpha(G^d)}.$

The Shannon capacity was introduced by Shannon~\cite{shannon1956zero} in 1956 to quantify the communication capacity of a noisy channel. It has been extensively studied: for an excellent recent survey we refer the reader to~\cite{lavi2025advances}. For the even cycle with $p$-vertices, it is straightforward to prove that $\Theta(C_p)=p/2$ (e.g.,~\cite{polak2019new}). The celebrated work of Lov\'asz~\cite{lovasz1979shannon} introduced the $\vartheta$-function and used it to prove that for the $5$-cycle $C_5$, $\Theta(C_5)=\sqrt{5}.$  For odd cycles $C_{2r+1}$ with $r>2$ the value of $\Theta(C_{2r+1})$ is not known. Even for the simplest case, determining the value $\Theta(C_7)$ is considered a notoriously difficult problem that is still open despite extensive effort~\cite{polak2019new,vesel2002improved,mathew2017new,guruswami2021linear}. The best known lower bound of $\Theta(C_7)$ was found~\cite{polak2019new} by clever construction of an independent set of size $367$ in $C_7^5$. The construction in~\cite{polak2019new} combines combinatorial ideas along with combinatorial search using an optimization solver (Gurobi). This immediately implies that 
$\Theta(C_7)\geq (367)^{1/5}> 3.2578.$  

Here we provide modest improvements on the best known lower bound of $\Theta(C_7)$, $\Theta(C_{11})$, and $\Theta(C_{15})$, and a somewhat larger improvement for $\Theta(C_{13})$:

\begin{theorem}~\label{thm:main}
We have that $\Theta(C_7) > 3.258020$, $\Theta(C_{11}) > 5.289773$, $\Theta(C_{13}) > 6.300109$, and $\Theta(C_{15})>7.301399$.
\end{theorem}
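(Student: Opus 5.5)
The proof will go through the definition $\Theta(G)=\sup_{d}\alpha(G^d)^{1/d}$. It therefore suffices to exhibit, for each of the four cycles, a single power $d$ and an explicit independent set $S\subseteq V(C_p)^d$ whose size $|S|$ satisfies $|S|^{1/d}$ above the claimed constant. The targets are the sets announced in the abstract. For $C_7$ we need $|S|=134753$ with $d=10$. For $C_{11}$ we need $21909$ with $d=6$. For $C_{13}$ we need $62530$ with $d=6$. For $C_{15}$ we need $8076974$ with $d=8$.

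The numerical part is routine. One checks $134753>3.258020^{10}$, $21909>5.289773^{6}$, $62530>6.300109^{6}$ and $8076974>7.301399^{8}$ by exact rational arithmetic. Concretely, raise each decimal bound, written as an integer over $10^6$, to the required power and compare integers.

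The substance is the construction and verification of the sets. I would describe each set compactly rather than list it. The natural template is a union of a few structured families, each a coset-like set defined by linear conditions modulo $p$. The model is the classical $C_5^2$ set $\{(i,2i)\}$ and its generalizations $\{x:\ \sum a_jx_j\equiv c\}$ restricted to suitable ranges. These families are glued with product constructions, using $\alpha(G^{d_1+d_2})\ge\alpha(G^{d_1})\alpha(G^{d_2})$ as a baseline, and then locally augmented: remove a few vertices from a product set and insert more vertices that fit in the freed space.

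Independence is checked via the criterion that distinct $u,v$ are non-adjacent in $C_p^d$ if and only if some coordinate satisfies $u_i-v_i\notin\{-1,0,1\}\pmod p$. For the structured parts this can be proved by hand. For example, two points of a linear family that agree mod $p$ on all but the differences in $\{-1,0,1\}$ must violate the defining congruence, by choosing coefficients so that no nonzero $\{-1,0,1\}$-combination vanishes mod $p$. For the cross-terms between families, the pairwise check is finite, and I would present it as a computer verification of all pairs, or better, via hashing: for each $u\in S$, enumerate the $3^d-1$ neighbours and test membership. This costs $|S|\cdot 3^d$, at most about $8\times10^6\cdot 6561\approx5\times10^{10}$ operations for $C_{15}^8$ and far less for the others, so it is feasible. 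Distinctness and the count $|S|$ are verified simultaneously.

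The main obstacle is not the verification but finding sets of these sizes. Each must beat the best product and known bound, for instance $367^2=134689<134753$ for $C_7^{10}$. So a pure product of the known $C_7^5$ set is insufficient, and the $64$ extra vertices must come from a genuinely new modification. My first attempt would be to take the product of two optimal or near-optimal $C_7^5$ sets, identify regions where both factors are locally non-tight, and run a local search to swap out blocks and insert more points, keeping the augmentation structured so that the proof of independence stays checkable. For the larger cycles I would start from the generalized linear constructions for $C_{2r+1}$ and search over coefficient vectors and offsets in the same way. In the written proof, the certificate (the explicit sets or their generating rules) together with the membership-check procedure constitutes the proof.
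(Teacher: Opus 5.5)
Your reduction is the same as the paper's. By the definition of $\Theta$ it suffices to exhibit independent sets of sizes $134753$, $21909$, $62530$, $8076974$ in $C_7^{10}$, $C_{11}^6$, $C_{13}^6$, $C_{15}^8$. Exact integer comparison is the right way to handle the numerics, since the margins are tiny, e.g.\ $6.300109^6\approx 62529.993$.

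The gap is that the proposal stops exactly where the proof begins. The theorem is a pure existence statement, so its entire content is the certificate, and you never produce one. A generic template (``coset-like'' families, products, ``local augmentation'') plus a plan to ``run a local search'' establishes nothing until a specific set is written down and checked. You also give no reason to expect the search to reach these sizes. This is not a formality: the paper reports that the authors' heuristic searches (simulated annealing, and AI-generated local-search codes) failed to reach these bounds over months of attempts. The step you defer is precisely the hard one, and without the sets nothing beyond the previously known bounds follows.

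Concretely, what is missing is an explicit rule producing each set.

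For $C_7$ the paper starts from the Polak--Schrijver set $R\subseteq\mathbb{Z}_7^5$ with $|R|=367$. It deletes eight vectors $r_0,\dots,r_7$ to get $B$, keeps $B\times B$, and refills whole fibres. It adds $(h_j(x),x)$ and $(x,v_j(x))$ for every $j$ and every $x$ in a copy $X$ of $R$. This copy is obtained by a symmetry of $C_7^5$ (coordinates permuted, two of them reflected), with one vector then altered. Here $h_j(x),v_j(x)\in\{r_j,q_j\}$, $q_j$ is a fixed neighbour of $r_j$, and the choice depends on whether $x$ is adjacent to certain of the $r_k,q_k$. Deleting the $r_j$ from both factors costs $16\cdot 367-64$ vectors, because the $64$ pairs $(r_j,r_k)$ lie in two deleted fibres. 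The refill adds $16\cdot 367$, a net gain of exactly $64$.

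The $C_{11}$ and $C_{15}$ constructions use the same delete-and-refill pattern, on the set of Baumert et al.\ and on a slight modification of the de Boer--Buys--Zuiddam set. They gain only $5$ and $10$ over the squares.

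For $C_{13}$ the product bound $247^2=61009$ is far below $62530$, so the paper uses a pullback rather than a product. For a specific rank-$4$ matrix $A\in\mathbb{Z}_{13}^{4\times 6}$ it takes $I=\{x\in\mathbb{Z}_{13}^6: Ax\in S\}$. Here $S$ is an independent set of size $370$ in the Cayley graph on $\mathbb{Z}_{13}^4$ with connection set $\{Ad: d\in\{-1,0,1\}^6\}\setminus\{0\}$, so $|I|=370\cdot 13^2$.

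Only once such explicit data are fixed does your verification step, which matches the paper's computer check, complete the proof.
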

\begin{proof}
This follows from the existence of an independent set of size $134753$ in $C_7^{10}$, an independent set of size $21909$ in $C_{11}^{6}$, an independent set of size $62530$ in $C_{13}^6$, and an independent set of size $8076974$ in $C_{15}^{8}$.  The independent sets can be found in the GitHub repository. A summary of the resulting lower bounds and comparison to previous lower bounds can be found in Table~\ref{tab:shannon-capacity-bounds}.
\end{proof}

\begin{table}[htbp]
\centering
\caption{Our improvements to the lower bounds on the Shannon Capacity for $C_7$, $C_{11}$, $C_{13}$, and $C_{15}$.}
\label{tab:shannon-capacity-bounds}
\renewcommand{\arraystretch}{1.15}
\setlength{\tabcolsep}{9pt}
\begin{tabular}{ccc}
\toprule
Graph & Previous lower bound & New lower bound \\
\midrule
$C_7$
& $367^{1/5} > 3.257865$~\cite{polak2019new}
& $134753^{1/10} > 3.258020$ \\

$C_{11}$
& $148^{1/3} > 5.289572$~\cite{baumert1971combinatorial}
& $21909^{1/6} > 5.289773$ \\

$C_{13}$
& $247^{1/3} > 6.274305$~\cite{baumert1971combinatorial,bohman2013independence}
& $62530^{1/6} > 6.300109$ \\

$C_{15}$
& $2842^{1/4} > 7.301397$~\cite{de2024asymptotic}
& $8076974^{1/8} > 7.301399$ \\
\bottomrule
\end{tabular}
\end{table}

We also improve several known lower bounds on the independence numbers of products of odd cycles that do not lead to improved lower bounds on the Shannon capacity. For example, we prove that $\alpha(C_{15}^3) \geq 383$ improving upon the previous bound in~\cite{codenotti2003some}. We report these in Appendix~\ref{appendix:additional-bounds} as they could be instrumental for future improvements.

\paragraph{Statement on AI use}
Theorem~\ref{thm:main} was obtained using multiple prompts to ChatGPT-5.6 Sol Pro. 
The correctness of the constructions (feasibility of the independent sets constructed)
was verified by the authors. The paper was written by the authors (LLMs were used for minor edits to improve grammar and style).

Perhaps surprisingly, ChatGPT-5.6 Sol Pro produced independent sets that were not found by the search heuristics manually implemented by the authors, including simulated annealing. Even local search algorithms that were constructed with generative AI (CPro1~\cite{rosin2025using}) failed to reach the improved lower bounds reported in Theorem~\ref{thm:main} despite more than 3 months of repeated attempts. We believe that this points to the importance of mathematical knowledge in improving lower bounds for the Shannon capacity of odd cycles. Interestingly several state of the art lower bounds for $C_7$ and $C_{15}$ ~\cite{polak2019new,de2024asymptotic} rely on a combination of heuristic search coupled with expert knowledge. In addition, it seems that ChatGPT is able to leverage existing known independent sets in strong products of graphs and modify them using ``small'' changes to improve upon current best lower bounds. Studying the mechanisms that allow it to surpass search heuristics for finding independent sets remains an interesting future research direction. 

\section{Finding independent sets in strong products of odd cycles}
The idea of improving the lower bounds for the Shannon capacity of $G$ by finding successively larger independent sets in suitable strong products
of $G$ is a natural and common approach that we follow as well. However, this approach is not without difficulties as it requires finding a quantity that is NP-hard to compute even approximately~\cite{haastad1999clique} in general, for graphs that reach thousands or more vertices for a modest value of the exponent $d$. The exponential growth rate of the size of the strong product makes it difficult to compute the size of $\alpha(G^d)$. For example, $\alpha(C_7^4)$ is currently not known. It is only known that $\alpha(C_7^4) \geq 108$ as it contains an independent set of this size~\cite{vesel2002improved}. It is a major open problem whether deciding if $\Theta(G)$ is larger than a given threshold is a decidable problem~\cite{alon2006shannon}. Even understanding the independence number of the $3$rd product of odd cycles is challenging. While the exact value of $\alpha(C_7^3)$ is known, only partial results are known about $\alpha(C^3_{2r+1})$ for $r>3$ and devising an exact formula for this quantity is currently open~\cite{bohman2003limit}. 

The recent use of Large Language Models (LLMs) for finding mathematical constructions has been applied to the Shannon capacity of odd cycles. In~\cite{romera2024mathematical} LLMs were used to rediscover $\alpha(C_7^{5}) \geq 367$, state-of-the-art lower bounds for $\alpha(C_9^d)$ for $d=3,..,7$ and improve the best lower bound on $\alpha(C_{11}^{4})$ by finding an independent set of size 754 in $C_{11}^4.$ These results were obtained by a simple greedy algorithm discovered using the FunSearch framework for finding mathematical constructions leveraging algorithms created with a combination of evolutionary algorithms and LLMs. This paradigm was used recently~\cite{zhai2025x} to improve the best known lower bound for $C_{15}^5$ by constructing an independent set of size 19946.  However, none of these prior LLM results led to improved lower bounds on the Shannon capacity of odd cycles. 

\subsection{Our approach}
Searches were conducted through the standard ChatGPT web interface using ChatGPT-5.6 Sol Pro. For each instance, we specified the cycle length, product dimension, best construction known to us, and target cardinality required for an improvement. The model generated search programs, executed them within its environment, and returned resulting independent sets in a specified format.

The prompt can be found in Appendix~\ref{appendix:prompt}.

\section{Constructions}

We represent the vertices of the cycle $C_k$ by $\mathbb{Z}_{k} = \{0,1,\ldots,k-1\}$, and vertices in the strong $n$-product by vectors in $\mathbb{Z}^{n}_{k}$.  Two distinct vertices have an edge if and only if their circular distance is at most 1 in every coordinate of their vectors.

\subsection{\texorpdfstring{7-Cycle $C_{7}$}{7-Cycle C7}}

For the 7-cycle $C_7$, the best current bound comes from the strong $5$-product, for which the largest known independent set has size 367 \cite{polak2019new}, giving $\Theta(C_7) \geq 367^{1/5} \approx 3.257866$.

Let $R$ be the explicit size-367 set of 5-vectors as given in the Appendix of the original publication \cite{polak2019new}.  Note that $R \times R$ would give a set of $367 \times 367 = 134689$ 10-vectors that constitute an independent set in the strong $10$-product.  We are instead going to delete several vectors from $R$ to yield $B$, and then take $B \times B$ augmented with additional vectors derived from $R$, to reach a larger independent set.

Define $r_j$ and $q_j$ according to the table below.  

\begin{table}[htbp]
\centering
\begin{tabular}{c c c}
\hline
$j$ & $r_j$ & $q_j$ \\
\hline
0 & (1,3,4,4,6) & (2,3,5,4,6) \\
1 & (3,4,0,3,5) & (2,4,6,3,5) \\
2 & (5,3,1,3,4) & (5,3,2,3,5) \\
3 & (4,4,6,1,6) & (5,4,6,0,6) \\
4 & (6,0,6,4,5) & (6,1,6,5,5) \\
5 & (0,3,5,6,5) & (6,3,5,0,5) \\
6 & (6,4,3,4,0) & (6,4,2,4,6) \\
7 & (6,4,5,3,2) & (6,5,5,3,1) \\
\hline
\end{tabular}
\end{table}

Let $B$ be a subset of 359 vectors from $R$, defined as $B = R \setminus \{r_j: 0 \leq j < 8\}$.

Taking arithmetic modulo 7, define:
\[X_0 = \{(2-w_1, w_3, w_0, (2-w_2), w_4): w \in R\}\]
Let $X$ be $X_0$ with the vector (2,4,6,3,5) replaced by (1,5,6,3,5).

Define $J_0 = \{0,5,6\}$ and $J_1 = \{1,2,3,4,7\}$.  Let:
\[P_H = \{r_j: j \in J_0\} \cup \{q_j: j \in J_1\}\]
\[P_V = \{q_j: j \in J_0\} \cup \{r_j: j \in J_1\}\]

Let $x \sim y$ indicate that the vertices corresponding to $x$ and $y$ have an edge; the vectors differ by at most 1 (taken modulo 7) in each coordinate.  Define:

\[A = \{x \in X|\exists y \in P_V : x \sim y\}\]
\[D = \{x \in X|\exists y \in P_H : x \sim y\}\]

The set $A$ has 20 vectors and $D$ has 26.
Define functions $h_j(x)$ and $v_j(x)$ for $x \in X$ as follows:

If $j \in J_0$ and $x \in A$, or $j \in J_1$ and $x \in D$, then let $h_j(x) = q_j$.  Otherwise let $h_j(x) = r_j$.

If $j \in J_0$ and $x \in D$, or $j \in J_1$ and $x \in A$, then let $v_j(x) = q_j$.  Otherwise let $v_j(x) = r_j$.

Now define this set of 10-vectors:
\[I = (B \times B) \cup \{(h_j(x),x): x \in X, 0 \leq j < 8\} \cup \{(x,v_j(x)): x \in X, 0 \leq j < 8\}\]

$I$ has $359 \times 359 + 8 \times 367 + 8 \times 367 = 134753$ vectors.  We verified that the vertices given by these vectors constitute an independent set in the strong $10$-product of $C_7$; for any two distinct $x,y \in I$ it is not the case that $x \sim y$.  This raises the lower bound on the Shannon capacity for $C_7$, giving $\Theta(C_{7}) \geq 134753^{1/10} > 3.258020$.

\subsection{\texorpdfstring{11-Cycle $C_{11}$}{11-Cycle C11}}

For the 11-cycle $C_{11}$, the best current bound comes from the strong 3-product, for which the largest known independent set has size 148 \cite{baumert1971combinatorial}, giving $\Theta(C_{11}) \geq 148^{1/3} \approx 5.289572$.  We start from this independent set.

Let $R$ be the size-148 independent set of 3-vectors described in~\cite{baumert1971combinatorial} (see Table~\ref{tab:r148}). Define the following sets of 3-vectors (with arithmetic taken modulo 11):
\begin{align*}
D_L &= \{(0,0,2),(3,0,2),(1,3,2),(3,2,2)\}, \\
D_R &= \{(0,0,2),(3,0,2)\}, \\
B_L &= R \setminus D_L, \\
B_R &= R \setminus D_R, \\
X &= \{(x_0+1,x_1+10,x_2+10):x\in R\}, \\
Y &= \{(x_0+1,x_1+1,x_2+10):x\in R\}, \\
F_A &= \{(1,10,1),(10,0,2)\}, \\
F'_A &= \{(0,1,2),(1,3,2),(2,1,2)\}, \\
F_B &= \{(2,0,4),(4,10,1),(4,10,3), \\
&\qquad (4,1,1),(4,1,3),(0,0,4)\}, \\
F'_B &= \{(1,0,2),(1,2,2),(3,0,2),(3,2,2)\}, \\
F_C &= \{(2,2,1),(2,2,3),(0,2,1),(0,2,3)\}, \\
F'_C &= \{(1,0,3),(1,3,2),(3,0,2),(3,2,2)\}, \\
G_A &= \{(1,1,1)\}, \\
G'_A &= \{(1,1,2)\}, \\
G_B &= \{(1,10,2),(2,2,4),(2,0,4),(3,10,2),(4,1,1), \\
&\qquad (4,1,3),(4,3,1),(4,3,3),(0,2,4),(0,0,4)\}, \\
G'_B &= \{(0,0,2),(2,0,2)\}, \\
G_C &= \{(2,4,1),(2,4,3),(10,2,2),(10,0,2), \\
&\qquad (0,4,1),(0,4,3)\}, \\
G'_C &= \{(1,0,3),(3,0,2)\}.
\end{align*}

Define set-valued functions $H(x)$ and $V(y)$ for $x \in X$ and $y \in Y$ as follows:

If $x \in F_A$ then $H(x)=F'_A$.  If $x \in F_B$ then $H(x)=F'_B$.  If $x \in F_C$ then $H(x)=F'_C$.  For all other values of $x$, $H(x)=D_L$.

If $y \in G_A$ then $V(y)=G'_A$.  If $y \in G_B$ then $V(y)=G'_B$.  If $y \in G_C$ then $V(y)=G'_C$.  For all other values of $y$, $V(y)=D_R$.

Now define this set of $6$-vectors:
\[I = B_L \times B_R \cup \{(h,x):x \in X, h \in H(x)\} \cup \{(y,v): y \in Y, v \in V(y)\}\]
$I$ has:
\begin{align*}
|I|
&= 144\cdot146
 + \bigl(2\cdot3+6\cdot4+4\cdot4+(148-12)\cdot4\bigr) \\
&\quad
 + \bigl(1\cdot1+10\cdot2+6\cdot2+(148-17)\cdot2\bigr) \\
&= 21909.
\end{align*}
vectors.  We verified that the vertices given by these vectors constitute an independent set in the strong $6$-product of $C_{11}$; for any two distinct $x,y \in I$ it is not the case that $x \sim y$.  This raises the lower bound on the Shannon capacity for $C_{11}$, giving $\Theta(C_{11}) \geq 21909^{1/6} > 5.289773$.

\subsection{\texorpdfstring{13-Cycle $C_{13}$}{13-Cycle C13}}

For the 13-cycle $C_{13}$, the best current bound comes from the strong $3$-product, in which the maximum independent set size was found to be exactly 247 \cite{bohman2013independence}, giving $\Theta(C_{13}) \geq 247^{1/3} \approx 6.274305$.  Define:

\[
A=
\begin{pmatrix}
1 & 0 & 0 & 0 & 11 & 12 \\
0 & 1 & 0 & 0 & 9 & 11 \\
0 & 0 & 1 & 7 & 1 & 0 \\
0 & 0 & 0 & 0 & 1 & 0 \\
\end{pmatrix}.
\]

Define a graph $G_A$ with $13^4$ vertices corresponding to $\mathbb{Z}^{4}_{13}$.  Two distinct vertices $s$ and $t$ are adjacent if $\exists d \in \{-1,0,1\}^6 : s-t \equiv Ad \mod 13$.  As part of its computational process during the chat, ChatGPT implemented and executed a randomized local search to find a large independent set in the graph $G_A$.  Let $S$ be the set of 370 4-vectors in this independent set (see Table~\ref{tab:s-table}).  Let
\[I = \{x \in \mathbb{Z}^{6}_{13}: Ax \in S\} \]
with arithmetic modulo 13.

The resulting set $I$ has 62530 6-vectors.  We verified that each pair of vectors of $I$ differs by more than 1 (modulo $13$) in at least one coordinate.  Therefore, the vertices corresponding to $I$ constitute an independent set in the strong $6$-product of cycle $C_{13}$.  This gives $\Theta(C_{13}) \geq 62530^{1/6} > 6.300109$.  

\subsection{\texorpdfstring{15-Cycle $C_{15}$}{15-Cycle C15}}

For the 15-cycle $C_{15}$, the best current bound comes from the strong $4$-product, for which the largest known independent set has size 2842 \cite{de2024asymptotic}, giving $\Theta(C_{15}) \geq 2842^{1/4} \approx 7.301398$.

Let $R_0$ be the explicit size-2842 set of 4-vectors as given in Appendix B of the original publication \cite{de2024asymptotic}.  Delete the vectors shown in the DELETE column of Table~\ref{tab:c15deleteadd}, and add the vectors shown in the ADD column.  Call the resulting set $R$, which also has 2842 vectors.

\begin{table}[h]
\centering
\begin{tabular}{c c}
\toprule
\textbf{DELETE} & \textbf{ADD} \\
\midrule
(13,0,0,9) & (13,0,0,8) \\
(13,14,0,7) & (13,14,0,6) \\
(13,13,0,5) & (13,14,14,4) \\
(13,14,13,4) & (13,14,12,3) \\
(13,14,11,3) & (13,0,10,3) \\
(13,0,9,3) & (13,0,8,2) \\
(13,0,7,2) & (13,0,6,1) \\
(13,0,5,1) & (14,1,4,1) \\
(14,1,3,1) & (14,1,2,0) \\
(14,1,1,0) & (14,1,0,14) \\
(14,1,0,13) & (14,1,0,12) \\
(13,0,0,11) & (13,0,0,10) \\
\bottomrule
\end{tabular}
\caption{Vectors to DELETE from $R_0$ and ADD to make $R$.}
\label{tab:c15deleteadd}
\end{table}

Let $B$ be a subset of 2838 vectors from $R$, defined as $B = R \setminus \{D_i: 0 \leq i < 4\}$ with $D_i$ shown in Table~\ref{tab:c15diqi}.  

For $i \in \{0,2,3\}$ let $T_i = \{(x_0,x_1-1,x_2,x_3): x \in R\}$, and   
for $i=1$ let $T_1 = \{(-x_1+5,x_0-1,-x_2+7,-x_3+3): x \in R\}$, with all arithmetic mod 15.

\begin{table}[h]
\centering
\begin{tabular}{c c c}
\toprule
$i$ & $D_i$ & $Q_i$ \\
\midrule
0 & (1,0,14,9) & (1,1,14,9) \\
1 & (4,0,8,9) & (3,0,8,9) \\
2 & (11,14,1,10) & (11,0,1,10) \\
3 & (14,1,0,14) & (14,2,0,14) \\
\bottomrule
\end{tabular}
\caption{The vectors $D_i$ and $Q_i$.}
\label{tab:c15diqi}
\end{table}

For vectors $x$ and $y$, as usual let $x \sim y$ indicate that the vectors differ by at most 1 (taken mod 15) in each coordinate.

Let $I_0 = B \times B$ be our initial set of 2838*2838=8054244 8-vectors.  We will then augment it with the 8-vectors described below to form $I$, our final set.

For $0 \leq i < 4$ and for each $x \in T_i$, let $j$ be the minimum value such that $x \sim D_j$ or $x \sim Q_j$.  If $j$ doesn't exist (because the condition is never satisfied) then add $(D_i,x)$ to $I$.  Otherwise ($j$ exists) check $W_0$ table entry $(i,j)$ in Table~\ref{tab:c15w0w1}.  The table entry is a triple.  If both $x \sim D_j$ and $x \sim Q_j$ then use the third element of the triple; otherwise if $x \sim D_j$ use the first element of the triple, or if $x \sim Q_j$ use the second element of the triple.  If the chosen element of the triple is D then add $(D_i,x)$ to $I$, if the chosen element of the triple is Q then add $(Q_i,x)$ to $I$, and if the chosen element of the triple is a dash (--) then add nothing to $I$.

Do the same using $W_1$, but add $(x,D_i)$ or $(x,Q_i)$ instead of $(D_i,x)$ or $(Q_i,x)$.  

$W_0$ and $W_1$ each add 11365 vectors to $I$, giving us a total of $8076974$ vectors.  

We verified that the vertices given by these vectors constitute an independent set in the strong 8-product of $C_{15}$; for any two distinct $x,y \in I$ it is not the case that $x \sim y$.  This improves the lower bound on the Shannon capacity for $C_{15}$, giving $\Theta(C_{15}) \geq 8076974^{1/8} > 7.301399$.  

\begin{table}[h]
\centering
\begin{subtable}{0.45\textwidth}
\centering
\begin{tabular}{c|cccc}
 & \multicolumn{4}{c}{$j$} \\
$i$ & 0 & 1 & 2 & 3 \\
\hline
0 & \texttt{DQ–} & \texttt{D–Q} & \texttt{DQ–} & \texttt{DQ–} \\
1 & \texttt{D––} & \texttt{DQ–} & \texttt{–D–} & \texttt{D––} \\
2 & \texttt{DQ–} & \texttt{D–D} & \texttt{DQ–} & \texttt{DQ–} \\
3 & \texttt{DQ–} & \texttt{D–Q} & \texttt{DQ–} & \texttt{DQ–} \\
\end{tabular}
\caption{$W_0$}
\end{subtable}%
\hfill
\begin{subtable}{0.45\textwidth}
\centering
\begin{tabular}{c|cccc}
 & \multicolumn{4}{c}{$j$} \\
$i$ & 0 & 1 & 2 & 3 \\
\hline
0 & \texttt{QD–} & \texttt{Q–Q} & \texttt{QD–} & \texttt{QD–} \\
1 & \texttt{Q––} & \texttt{QD–} & \texttt{–D–} & \texttt{Q––} \\
2 & \texttt{QD–} & \texttt{D–D} & \texttt{QD–} & \texttt{QD–} \\
3 & \texttt{QD–} & \texttt{Q–Q} & \texttt{QD–} & \texttt{QD–} \\
\end{tabular}
\caption{$W_1$}
\end{subtable}
\caption{The matrices $W_0$ and $W_1$.}
\label{tab:c15w0w1}
\end{table}

\clearpage

\section*{Data and Code Availability}
Independent-set constructions, generated programs, prompts, and supporting documentation are available at \url{https://github.com/nathanielitty/lower-bounds-for-shannon-capacity}.

\textbf{}\bibliographystyle{plain}
\bibliography{reference}

@article{romera2024mathematical,
  title={Mathematical discoveries from program search with large language models},
  author={Romera-Paredes, Bernardino and Barekatain, Mohammadamin and Novikov, Alexander and Balog, Matej and Kumar, M Pawan and Dupont, Emilien and Ruiz, Francisco JR and Ellenberg, Jordan S and Wang, Pengming and Fawzi, Omar and others},
  journal={Nature},
  volume={625},
  number={7995},
  pages={468--475},
  year={2024},
  publisher={Nature Publishing Group UK London}
}

@article{alon2006shannon,
  title={The {S}hannon capacity of a graph and the independence numbers of its powers},
  author={Alon, Noga and Lubetzky, Eyal},
  journal={IEEE Transactions on Information Theory},
  volume={52},
  number={5},
  pages={2172--2176},
  year={2006},
  publisher={IEEE}
}

@article{rosin2025using,
  title={Using reasoning models to generate search heuristics that solve open instances of combinatorial design problems},
  author={Rosin, Christopher D},
  journal={arXiv preprint arXiv:2505.23881},
  year={2025}
}

@article{lavi2025advances,
  title={Advances in the {S}hannon capacity of graphs},
  author={Lavi, Nitay and Sason, Igal},
  journal={arXiv preprint arXiv:2509.24600},
  year={2025}
}

@article{bohman2003limit,
  title={A limit theorem for the {S}hannon capacities of odd cycles {I}},
  author={Bohman, Tom},
  journal={Proceedings of the American Mathematical Society},
  volume={131},
  number={11},
  pages={3559--3569},
  year={2003}
}

@article{bohman2013independence,
  title={On the independence numbers of the cubes of odd cycles},
  author={Bohman, Tom and Holzman, Ron and Natarajan, Venkatesh},
  journal={{T}he {E}lectronic {J}ournal of {C}ombinatorics},
  pages={P10},
  year={2013}
}

@article{polak2019new,
  title={New lower bound on the {S}hannon capacity of {C}7 from circular graphs},
  author={Polak, Sven C and Schrijver, Alexander},
  journal={Information Processing Letters},
  volume={143},
  pages={37--40},
  year={2019},
  publisher={Elsevier}
}

@article{lovasz1979shannon,
  title={On the {S}hannon capacity of a graph},
  author={Lov{\'a}sz, L{\'a}szl{\'o}},
  journal={IEEE Transactions on Information theory},
  volume={25},
  number={1},
  pages={1--7},
  year={1979},
  publisher={IEEE}
}

@article{baumert1971combinatorial,
  title={A combinatorial packing problem},
  author={Baumert, Leonard D and McEliece, Robert J and Rodemich, Eugene and Rumsey, H and Stanley, Richard and Taylor, Herbert},
  journal={Computers in algebra and number theory},
  volume={4},
  pages={97--108},
  year={1971},
  publisher={Amer. Math. Soc.}
}

@article{shannon1956zero,
  title={The zero error capacity of a noisy channel},
  author={Shannon, Claude},
  journal={IRE Transactions on Information Theory},
  volume={2},
  number={3},
  pages={8--19},
  year={1956}
}

@article{de2024asymptotic,
  title={The asymptotic spectrum distance, graph limits, and the {S}hannon capacity},
  author={de Boer, David and Buys, Pjotr and Zuiddam, Jeroen},
  journal={arXiv preprint arXiv:2404.16763},
  year={2024}
}

@article{haastad1999clique,
  title={Clique is hard to approximate within $n^{1- \varepsilon}$},
  author={H{\aa}stad, Johan},
  journal={Acta Mathematica},
  volume={182},
  number={1},
  pages={105--142},
  year={1999},
  publisher={International Press of Boston}
}

@inproceedings{guruswami2021linear,
  title={Linear {S}hannon capacity of {C}ayley graphs},
  author={Guruswami, Venkatesan and Riazanov, Andrii},
  booktitle={2021 IEEE International Symposium on Information Theory (ISIT)},
  pages={988--992},
  year={2021},
  organization={IEEE}
}

@article{vesel2002improved,
  title={Improved lower bound on the {S}hannon capacity of {C}7},
  author={Vesel, Aleksander and {\v{Z}}erovnik, Janez},
  journal={Information processing letters},
  volume={81},
  number={5},
  pages={277--282},
  year={2002},
  publisher={Elsevier}
}

@article{mathew2017new,
  title={New lower bounds for the {S}hannon capacity of odd cycles},
  author={Mathew, K Ashik and {\"O}sterg{\aa}rd, Patric RJ},
  journal={Designs, Codes and Cryptography},
  volume={84},
  number={1},
  pages={13--22},
  year={2017},
  publisher={Springer}
}

@article{zhai2025x,
  title={{X}-{E}volve: Solution space evolution powered by large language models},
  author={Zhai, Yi and Wei, Zhiqiang and Li, Ruohan and Pan, Keyu and Liu, Shuo and Zhang, Lu and Ji, Jianmin and Zhang, Wuyang and Zhang, Yu and Zhang, Yanyong},
  journal={arXiv preprint arXiv:2508.07932},
  year={2025}
}

@article{codenotti2003some,
  title={Some remarks on the {S}hannon capacity of odd cycles},
  author={Codenotti, Bruno and Gerace, Ivan and Resta, Giovanni and others},
  journal={Ars Combinatoria},
  volume={66},
  pages={243--258},
  year={2003},
  publisher={Waterloo [Ont.] Dept. of Combinatorics and Optimization, University of Waterloo.}
}

\appendix
\section{ChatGPT Interactions}\label{appendix:prompt}
Each search centered on a detailed initial prompt specifying the construction problem, the required output format, the existing construction to be improved, and the target cardinality. After the model produced an improved construction, we used brief follow-up instructions asking it to continue beyond the newly obtained result.

The following initial prompt was used to search for an independent set in
$C_7^{10}$ of size greater than $134689$. The prompts for
$C_{11}$ and $C_{13}$ followed the same format, adapted to the corresponding cycle length, product dimension, existing construction, and target cardinality.

\begin{quote}
\small\ttfamily\raggedright
\setlength{\emergencystretch}{3em}
We are working on the following problem:

A "Cycle Code" CC(n,k,m) is set of m length-n codewords over the alphabet {\{}0,1,...,k-1{\}}, such that any pair of distinct codewords differ by more than 1 (taken circularly) in at least one of the n positions. That is, for every two codewords x and y of length n, there is a position i such that min((x\_i - y\_i)\%k, (y\_i - x\_i)\%k){>} 1. For our purposes, k{<}=15, n{<}=10, and m{<} 200000. Output CC(n,k,m) with one codeword per line, with each codeword as n space-separated integers, each integer in {\{}0,1,...,k-1{\}}. 

This is essentially the Shannon capacity problem for cycles. For k=7, the best known bound on Shannon capacity is from the existence of CC(5,7,367). We have explored this extensively and we do not believe the 367 can be increased. We have found some small improvements for k=7 with n in {\{}6,7,8,9{\}}, but these are nowhere close to improving the bound on Shannon capacity. We believe the best opportunity for improving the Shannon capacity bound for k=7 will come from n=10, where the best known result is m=367*367=134689 by simple composition of CC(5,7,367). Find CC(10,7,m) with m{>} 134689.
\end{quote}

The initial search produced a construction $CC(10,7,134690)$. We
continued the same conversation with the instruction

\begin{quote}
\small\ttfamily
\noindent That's great. Now push further to m>134690.
\end{quote}

This follow-up produced $CC(10,7,134693)$.  Subsequent similar iterations in the same chat interaction yielded $134698$, $134714$, $134738$, and $134751$, and then the final construction $CC(10,7,134753)$ used in Theorem~\ref{thm:main}.

Analogous initial and follow-up interactions were used to discover the other constructions.  Each cycle size was addressed in an independent chat.  For $C_{15}$ though, the initial chat failed to find an improvement to the Shannon capacity lower bound, and v1 of this report did not contain any such improvement.  After v1 of this report appeared online, ChatGPT found the report online (without being prompted to do so) during subsequent interaction on $C_{15}$, and applied the techniques from the v1 report to find an improvement to the lower bound on Shannon capacity for $C_{15}$.

\section{Additional Lower Bounds on Independence Numbers}\label{appendix:additional-bounds}
In addition to the constructions that improve the known lower bounds on $\Theta(C_7)$, $\Theta(C_{11})$, $\Theta(C_{13})$, and $\Theta(C_{15})$, our searches produced several improved lower bounds on the independence numbers of individual strong powers of odd cycles. These constructions, summarized in Table \ref{tab:additional-mis-bounds}, improve the size of the largest independent set, but do not improve the bound on the Shannon capacity of the underlying cycle.

\begin{table}[ht]
\centering
\caption{Improvements to lower bounds on independence numbers.}
\label{tab:additional-mis-bounds}
\renewcommand{\arraystretch}{1.15}
\setlength{\tabcolsep}{16pt}
\begin{tabular}{ccc}
\hline
Graph & Previous bound & Current bound \\
\hline
$C_{11}^4$ & $754$~\cite{romera2024mathematical}  & $766$  \\
$C_{13}^4$ & $1534$~\cite{mathew2017new} & $1535$ \\
$C_{7}^6$  & $1101$~\cite{polak2019new} & $1120$ \\
$C_{15}^3$ & $382$~\cite{codenotti2003some}  & $383$  \\
\hline
\end{tabular}
\end{table}

The entry $1101$ for $C_7^6$ is the elementary product construction obtained from an independent set of size $367$ in $C_7^5$~\cite{polak2019new}.

\clearpage
\section{Independent Set Constructions}
\subsection{\texorpdfstring{Auxiliary Set $S$ for the $C_{13}^6$ Construction}{Auxiliary Set S for the C13, power 6 Construction}}
\label{appendix:c13-370}

\begin{table}[H]
\centering
\footnotesize
\setlength{\tabcolsep}{2pt}
\renewcommand{\arraystretch}{0.92}
\resizebox{0.88\textwidth}{!}{%
\begin{tabular}{|*{7}{rrrr|}}
\hline
7 & 3 & 0 & 0 & 0 & 7 & 11 & 1 & 0 & 7 & 9 & 3 & 12 & 11 & 7 & 5 & 8 & 11 & 6 & 7 & 0 & 7 & 4 & 9 & 2 & 3 & 2 & 11 \\
8 & 11 & 0 & 0 & 1 & 11 & 11 & 1 & 2 & 7 & 9 & 3 & 1 & 3 & 8 & 5 & 8 & 3 & 7 & 7 & 11 & 7 & 4 & 9 & 1 & 7 & 2 & 11 \\
10 & 11 & 0 & 0 & 12 & 11 & 11 & 1 & 8 & 7 & 10 & 3 & 12 & 3 & 8 & 5 & 10 & 3 & 7 & 7 & 5 & 7 & 5 & 9 & 8 & 3 & 3 & 11 \\
0 & 3 & 1 & 0 & 5 & 11 & 12 & 1 & 1 & 7 & 11 & 3 & 7 & 3 & 9 & 5 & 1 & 11 & 7 & 7 & 0 & 11 & 5 & 9 & 5 & 7 & 3 & 11 \\
2 & 3 & 1 & 0 & 4 & 3 & 0 & 2 & 12 & 7 & 11 & 3 & 3 & 7 & 9 & 5 & 12 & 11 & 7 & 7 & 11 & 11 & 5 & 9 & 0 & 7 & 4 & 11 \\
6 & 3 & 2 & 0 & 8 & 11 & 0 & 2 & 9 & 11 & 11 & 3 & 5 & 7 & 9 & 5 & 1 & 3 & 8 & 7 & 4 & 11 & 6 & 9 & 11 & 7 & 4 & 11 \\
11 & 7 & 2 & 0 & 10 & 3 & 1 & 2 & 0 & 11 & 12 & 3 & 11 & 7 & 10 & 5 & 3 & 3 & 8 & 7 & 8 & 3 & 7 & 9 & 4 & 7 & 5 & 11 \\
2 & 7 & 3 & 0 & 12 & 11 & 1 & 2 & 2 & 11 & 12 & 3 & 2 & 7 & 11 & 5 & 7 & 3 & 9 & 7 & 10 & 11 & 7 & 9 & 1 & 11 & 5 & 11 \\
4 & 7 & 3 & 0 & 1 & 3 & 2 & 2 & 4 & 3 & 0 & 4 & 4 & 7 & 11 & 5 & 1 & 7 & 9 & 7 & 12 & 11 & 7 & 9 & 3 & 11 & 5 & 11 \\
10 & 7 & 4 & 0 & 3 & 3 & 2 & 2 & 6 & 3 & 0 & 4 & 9 & 11 & 11 & 5 & 3 & 7 & 9 & 7 & 1 & 3 & 8 & 9 & 7 & 11 & 6 & 11 \\
1 & 7 & 5 & 0 & 2 & 7 & 2 & 2 & 12 & 11 & 0 & 4 & 2 & 11 & 12 & 5 & 7 & 7 & 10 & 7 & 3 & 3 & 8 & 9 & 5 & 3 & 7 & 11 \\
3 & 7 & 5 & 0 & 6 & 7 & 3 & 2 & 10 & 3 & 1 & 4 & 4 & 11 & 12 & 5 & 0 & 7 & 11 & 7 & 7 & 3 & 9 & 9 & 7 & 3 & 7 & 11 \\
0 & 11 & 5 & 0 & 8 & 7 & 3 & 2 & 3 & 3 & 2 & 4 & 6 & 3 & 0 & 6 & 2 & 7 & 11 & 7 & 3 & 7 & 9 & 9 & 0 & 11 & 7 & 11 \\
4 & 11 & 6 & 0 & 1 & 7 & 4 & 2 & 5 & 3 & 2 & 4 & 8 & 3 & 0 & 6 & 11 & 11 & 11 & 7 & 5 & 7 & 9 & 9 & 0 & 3 & 8 & 11 \\
6 & 11 & 6 & 0 & 7 & 11 & 4 & 2 & 9 & 3 & 3 & 4 & 8 & 11 & 0 & 6 & 4 & 11 & 12 & 7 & 9 & 7 & 10 & 9 & 4 & 3 & 9 & 11 \\
9 & 3 & 7 & 0 & 5 & 7 & 5 & 2 & 4 & 7 & 3 & 4 & 12 & 3 & 1 & 6 & 6 & 11 & 12 & 7 & 3 & 7 & 11 & 9 & 6 & 3 & 9 & 11 \\
11 & 3 & 7 & 0 & 11 & 11 & 5 & 2 & 6 & 7 & 3 & 4 & 5 & 3 & 2 & 6 & 4 & 3 & 0 & 8 & 7 & 11 & 11 & 9 & 3 & 7 & 9 & 11 \\
12 & 11 & 7 & 0 & 4 & 11 & 6 & 2 & 10 & 7 & 4 & 4 & 7 & 3 & 2 & 6 & 10 & 11 & 0 & 8 & 9 & 11 & 11 & 9 & 7 & 7 & 10 & 11 \\
4 & 3 & 8 & 0 & 6 & 11 & 6 & 2 & 11 & 11 & 4 & 4 & 0 & 7 & 2 & 6 & 8 & 3 & 1 & 8 & 2 & 11 & 12 & 9 & 9 & 7 & 10 & 11 \\
8 & 3 & 9 & 0 & 6 & 3 & 7 & 2 & 3 & 7 & 5 & 4 & 6 & 7 & 3 & 6 & 10 & 3 & 1 & 8 & 7 & 3 & 0 & 10 & 2 & 7 & 11 & 11 \\
10 & 3 & 9 & 0 & 8 & 3 & 7 & 2 & 2 & 11 & 5 & 4 & 10 & 7 & 4 & 6 & 3 & 3 & 2 & 8 & 8 & 11 & 0 & 10 & 10 & 11 & 11 & 11 \\
0 & 7 & 9 & 0 & 10 & 11 & 7 & 2 & 8 & 11 & 6 & 4 & 12 & 7 & 4 & 6 & 1 & 7 & 2 & 8 & 10 & 11 & 0 & 10 & 12 & 11 & 11 & 11 \\
6 & 7 & 10 & 0 & 12 & 3 & 8 & 2 & 10 & 11 & 6 & 4 & 5 & 7 & 5 & 6 & 7 & 3 & 3 & 8 & 0 & 3 & 1 & 10 & 5 & 11 & 12 & 11 \\
8 & 7 & 10 & 0 & 5 & 3 & 9 & 2 & 8 & 3 & 7 & 4 & 0 & 11 & 5 & 6 & 7 & 7 & 3 & 8 & 11 & 3 & 1 & 10 & 4 & 3 & 0 & 12 \\
12 & 7 & 11 & 0 & 7 & 3 & 9 & 2 & 1 & 11 & 7 & 4 & 11 & 11 & 5 & 6 & 0 & 7 & 4 & 8 & 6 & 3 & 2 & 10 & 7 & 11 & 0 & 12 \\
9 & 11 & 11 & 0 & 4 & 7 & 9 & 2 & 1 & 3 & 8 & 4 & 6 & 11 & 6 & 6 & 11 & 7 & 4 & 8 & 12 & 7 & 2 & 10 & 9 & 11 & 0 & 12 \\
11 & 11 & 11 & 0 & 10 & 7 & 10 & 2 & 12 & 3 & 8 & 4 & 10 & 3 & 7 & 6 & 9 & 11 & 4 & 8 & 10 & 3 & 3 & 10 & 10 & 3 & 1 & 12 \\
2 & 11 & 12 & 0 & 12 & 7 & 10 & 2 & 7 & 3 & 9 & 4 & 10 & 11 & 7 & 6 & 0 & 11 & 5 & 8 & 4 & 7 & 3 & 10 & 12 & 3 & 1 & 12 \\
3 & 2 & 0 & 1 & 3 & 7 & 11 & 2 & 0 & 7 & 9 & 4 & 12 & 11 & 7 & 6 & 2 & 11 & 5 & 8 & 10 & 7 & 4 & 10 & 3 & 3 & 2 & 12 \\
0 & 11 & 0 & 1 & 9 & 11 & 11 & 2 & 2 & 7 & 9 & 4 & 1 & 3 & 8 & 6 & 6 & 11 & 6 & 8 & 12 & 7 & 4 & 10 & 1 & 7 & 2 & 12 \\
11 & 11 & 0 & 1 & 0 & 11 & 12 & 2 & 8 & 7 & 10 & 4 & 3 & 3 & 8 & 6 & 8 & 11 & 6 & 8 & 3 & 7 & 5 & 10 & 3 & 7 & 2 & 12 \\
10 & 3 & 1 & 1 & 2 & 11 & 12 & 2 & 1 & 7 & 11 & 4 & 9 & 3 & 9 & 6 & 6 & 3 & 7 & 8 & 0 & 11 & 5 & 10 & 9 & 3 & 3 & 12 \\
1 & 3 & 2 & 1 & 6 & 3 & 0 & 3 & 12 & 7 & 11 & 4 & 2 & 7 & 9 & 6 & 1 & 11 & 7 & 8 & 2 & 11 & 5 & 10 & 7 & 7 & 3 & 12 \\
3 & 3 & 2 & 1 & 8 & 3 & 0 & 3 & 0 & 11 & 11 & 4 & 4 & 7 & 9 & 6 & 12 & 11 & 7 & 8 & 5 & 3 & 6 & 10 & 0 & 7 & 4 & 12 \\
12 & 7 & 2 & 1 & 8 & 11 & 0 & 3 & 4 & 11 & 12 & 4 & 8 & 7 & 10 & 6 & 1 & 3 & 8 & 8 & 6 & 11 & 6 & 10 & 2 & 7 & 4 & 12 \\
3 & 7 & 3 & 1 & 1 & 3 & 1 & 3 & 6 & 11 & 12 & 4 & 1 & 7 & 11 & 6 & 12 & 3 & 8 & 8 & 9 & 3 & 7 & 10 & 6 & 7 & 5 & 12 \\
5 & 7 & 3 & 1 & 12 & 11 & 1 & 3 & 4 & 3 & 0 & 5 & 3 & 7 & 11 & 6 & 5 & 3 & 9 & 8 & 1 & 11 & 7 & 10 & 12 & 11 & 5 & 12 \\
11 & 7 & 4 & 1 & 5 & 3 & 2 & 3 & 6 & 3 & 0 & 5 & 9 & 11 & 11 & 6 & 3 & 7 & 9 & 8 & 12 & 11 & 7 & 10 & 3 & 11 & 6 & 12 \\
10 & 11 & 4 & 1 & 7 & 3 & 2 & 3 & 8 & 11 & 0 & 5 & 2 & 11 & 12 & 6 & 5 & 7 & 9 & 8 & 2 & 3 & 8 & 10 & 5 & 11 & 6 & 12 \\
2 & 7 & 5 & 1 & 4 & 7 & 3 & 3 & 10 & 3 & 1 & 5 & 4 & 11 & 12 & 6 & 9 & 7 & 10 & 8 & 4 & 3 & 8 & 10 & 6 & 3 & 7 & 12 \\
4 & 7 & 5 & 1 & 6 & 7 & 3 & 3 & 3 & 3 & 2 & 5 & 6 & 3 & 0 & 7 & 3 & 7 & 11 & 8 & 8 & 3 & 9 & 10 & 8 & 3 & 7 & 12 \\
3 & 11 & 5 & 1 & 10 & 7 & 4 & 3 & 5 & 3 & 2 & 5 & 10 & 11 & 0 & 7 & 5 & 7 & 11 & 8 & 1 & 7 & 9 & 10 & 11 & 11 & 7 & 12 \\
7 & 11 & 6 & 1 & 7 & 11 & 4 & 3 & 1 & 7 & 2 & 5 & 12 & 3 & 1 & 7 & 11 & 11 & 11 & 8 & 6 & 7 & 10 & 10 & 1 & 3 & 8 & 12 \\
9 & 11 & 6 & 1 & 3 & 7 & 5 & 3 & 9 & 3 & 3 & 5 & 5 & 3 & 2 & 7 & 4 & 11 & 12 & 8 & 8 & 7 & 10 & 10 & 5 & 3 & 9 & 12 \\
6 & 3 & 7 & 1 & 11 & 11 & 5 & 3 & 7 & 7 & 3 & 5 & 12 & 7 & 2 & 7 & 6 & 3 & 0 & 9 & 1 & 7 & 11 & 10 & 7 & 3 & 9 & 12 \\
8 & 3 & 7 & 1 & 4 & 11 & 6 & 3 & 9 & 7 & 3 & 5 & 9 & 3 & 3 & 7 & 6 & 11 & 0 & 9 & 9 & 11 & 11 & 10 & 5 & 7 & 9 & 12 \\
2 & 11 & 7 & 1 & 6 & 11 & 6 & 3 & 0 & 7 & 4 & 5 & 5 & 7 & 3 & 7 & 8 & 11 & 0 & 9 & 11 & 11 & 11 & 10 & 9 & 7 & 10 & 12 \\
12 & 3 & 8 & 1 & 10 & 3 & 7 & 3 & 6 & 7 & 5 & 5 & 9 & 7 & 4 & 7 & 10 & 3 & 1 & 9 & 4 & 11 & 12 & 10 & 11 & 7 & 10 & 12 \\
5 & 3 & 9 & 1 & 12 & 3 & 7 & 3 & 0 & 11 & 5 & 5 & 11 & 7 & 4 & 7 & 12 & 3 & 1 & 9 & 3 & 3 & 0 & 11 & 4 & 7 & 11 & 12 \\
7 & 3 & 9 & 1 & 10 & 11 & 7 & 3 & 11 & 11 & 5 & 5 & 9 & 11 & 4 & 7 & 5 & 3 & 2 & 9 & 9 & 11 & 0 & 11 & 8 & 11 & 11 & 12 \\
1 & 7 & 9 & 1 & 3 & 3 & 8 & 3 & 6 & 11 & 6 & 5 & 4 & 7 & 5 & 7 & 1 & 7 & 2 & 9 & 11 & 11 & 0 & 11 & 10 & 11 & 11 & 12 \\
7 & 7 & 10 & 1 & 9 & 3 & 9 & 3 & 8 & 3 & 7 & 5 & 0 & 11 & 5 & 7 & 9 & 3 & 3 & 9 & 9 & 3 & 1 & 11 & 1 & 11 & 12 & 12 \\
9 & 7 & 10 & 1 & 11 & 3 & 9 & 3 & 10 & 11 & 7 & 5 & 2 & 11 & 5 & 7 & 7 & 7 & 3 & 9 & 11 & 3 & 1 & 11 & &&& \\
\hline
\end{tabular}%
}
\caption{The set $S\subseteq \mathbb{Z}_{13}^{4}$ consisting of $370$ vectors.}
\label{tab:s-table}
\end{table}

\clearpage
\subsection{Base Construction in \texorpdfstring{$C_{11}^{3}$}{C11, power 3}}
\label{appendix:c11-148}
\begin{table}[H]
\centering
\footnotesize
\setlength{\tabcolsep}{2pt}
\renewcommand{\arraystretch}{0.92}
\resizebox{0.88\textwidth}{!}{%
\begin{tabular}{|*{7}{rrr|}}
\hline
0 & 0 & 0  & 1 & 6 & 7  & 3 & 2 & 2  & 4 & 8 & 3  & 6 & 4 & 9  & 7 & 9 & 4  & 9 & 5 & 10 \\
0 & 0 & 2  & 1 & 7 & 0  & 3 & 2 & 4  & 4 & 9 & 7  & 6 & 5 & 2  & 7 & 10 & 8 & 9 & 6 & 5 \\
0 & 1 & 7  & 1 & 8 & 6  & 3 & 3 & 8  & 4 & 10 & 0 & 6 & 5 & 4  & 8 & 0 & 5  & 9 & 7 & 9 \\
0 & 2 & 0  & 1 & 9 & 10 & 3 & 4 & 1  & 5 & 0 & 8  & 6 & 6 & 8  & 8 & 1 & 10 & 9 & 8 & 2 \\
0 & 3 & 6  & 1 & 10 & 5 & 3 & 4 & 3  & 5 & 1 & 2  & 6 & 7 & 1  & 8 & 2 & 5  & 9 & 8 & 4 \\
0 & 4 & 10 & 2 & 0 & 0  & 3 & 5 & 7  & 5 & 1 & 4  & 6 & 7 & 3  & 8 & 3 & 9  & 9 & 9 & 8 \\
0 & 5 & 5  & 2 & 1 & 7  & 3 & 6 & 0  & 5 & 2 & 8  & 6 & 8 & 7  & 8 & 4 & 2  & 9 & 10 & 1 \\
0 & 6 & 9  & 2 & 2 & 0  & 3 & 7 & 6  & 5 & 3 & 1  & 6 & 9 & 0  & 8 & 4 & 4  & 9 & 10 & 3 \\
0 & 7 & 2  & 2 & 3 & 6  & 3 & 8 & 10 & 5 & 3 & 3  & 6 & 10 & 6 & 8 & 5 & 8  & 10 & 0 & 9 \\
0 & 7 & 4  & 2 & 4 & 10 & 3 & 9 & 5  & 5 & 4 & 7  & 7 & 0 & 1  & 8 & 6 & 1  & 10 & 1 & 5 \\
0 & 8 & 8  & 2 & 5 & 5  & 3 & 10 & 9 & 5 & 5 & 0  & 7 & 0 & 3  & 8 & 6 & 3  & 10 & 2 & 9 \\
0 & 9 & 1  & 2 & 6 & 9  & 4 & 0 & 6  & 5 & 6 & 6  & 7 & 1 & 8  & 8 & 7 & 7  & 10 & 3 & 2 \\
0 & 9 & 3  & 2 & 7 & 2  & 4 & 1 & 0  & 5 & 7 & 10 & 7 & 2 & 1  & 8 & 8 & 0  & 10 & 3 & 4 \\
0 & 10 & 7 & 2 & 7 & 4  & 4 & 2 & 6  & 5 & 8 & 5  & 7 & 2 & 3  & 8 & 9 & 6  & 10 & 4 & 8 \\
1 & 0 & 9  & 2 & 8 & 8  & 4 & 3 & 10 & 5 & 9 & 9  & 7 & 3 & 7  & 8 & 10 & 10 & 10 & 5 & 1 \\
1 & 1 & 5  & 2 & 9 & 1  & 4 & 4 & 5  & 5 & 10 & 2 & 7 & 4 & 0  & 9 & 0 & 7  & 10 & 5 & 3 \\
1 & 2 & 9  & 2 & 9 & 3  & 4 & 5 & 9  & 5 & 10 & 4 & 7 & 5 & 6  & 9 & 1 & 1  & 10 & 6 & 7 \\
1 & 3 & 2  & 2 & 10 & 7 & 4 & 6 & 2  & 6 & 0 & 10 & 7 & 6 & 10 & 9 & 1 & 3  & 10 & 7 & 0 \\
1 & 3 & 4  & 3 & 0 & 2  & 4 & 6 & 4  & 6 & 1 & 6  & 7 & 7 & 5  & 9 & 2 & 7  & 10 & 8 & 6 \\
1 & 4 & 8  & 3 & 0 & 4  & 4 & 7 & 8  & 6 & 2 & 10 & 7 & 8 & 9  & 9 & 3 & 0  & 10 & 9 & 10 \\
1 & 5 & 1  & 3 & 1 & 9  & 4 & 8 & 1  & 6 & 3 & 5  & 7 & 9 & 2  & 9 & 4 & 6  & 10 & 10 & 5 \\
1 & 5 & 3
  & \multicolumn{3}{c|}{}
  & \multicolumn{3}{c|}{}
  & \multicolumn{3}{c|}{}
  & \multicolumn{3}{c|}{}
  & \multicolumn{3}{c|}{}
  & \multicolumn{3}{c|}{} \\
\hline
\end{tabular}%
}
\caption{The 148 vectors in $\mathbb{Z}_{11}^{3}$ that form the basis for the $C_{11}$ construction.}
\label{tab:r148}
\end{table}

\clearpage
\subsection{\texorpdfstring{$C_{11}^4$}{C11, power 4}}
\label{appendix:c11-766}

\begin{table}[H]
\centering
\footnotesize
\setlength{\tabcolsep}{2pt}
\renewcommand{\arraystretch}{0.92}
\resizebox{\textwidth}{!}{%
\begin{tabular}{|*{11}{rrrr|}}
\hline
0 & 0 & 2 & 1 & 0 & 0 & 3 & 6 & 0 & 0 & 3 & 8 & 0 & 0 & 7 & 0 & 0 & 0 & 7 & 9 & 0 & 1 & 0 & 1 & 0 & 1 & 0 & 5 & 0 & 1 & 5 & 6 & 0 & 1 & 5 & 8 & 0 & 1 & 5 & 10 & 0 & 1 & 9 & 0 \\
0 & 2 & 2 & 1 & 0 & 2 & 3 & 6 & 0 & 2 & 3 & 8 & 0 & 2 & 3 & 10 & 0 & 2 & 7 & 0 & 0 & 3 & 0 & 1 & 0 & 3 & 1 & 6 & 0 & 3 & 1 & 8 & 0 & 3 & 1 & 10 & 0 & 3 & 5 & 0 & 0 & 3 & 9 & 1 \\
0 & 3 & 10 & 6 & 0 & 3 & 10 & 8 & 0 & 3 & 10 & 10 & 0 & 4 & 3 & 0 & 0 & 4 & 7 & 1 & 0 & 4 & 8 & 6 & 0 & 4 & 8 & 8 & 0 & 4 & 8 & 10 & 0 & 5 & 1 & 0 & 0 & 5 & 5 & 1 & 0 & 5 & 6 & 8 \\
0 & 5 & 6 & 10 & 0 & 5 & 10 & 0 & 0 & 6 & 3 & 1 & 0 & 6 & 4 & 8 & 0 & 6 & 4 & 10 & 0 & 6 & 8 & 0 & 0 & 7 & 2 & 6 & 0 & 7 & 2 & 8 & 0 & 7 & 2 & 10 & 0 & 7 & 6 & 0 & 0 & 8 & 0 & 6 \\
0 & 8 & 0 & 8 & 0 & 8 & 0 & 10 & 0 & 8 & 4 & 0 & 0 & 8 & 8 & 1 & 0 & 8 & 9 & 6 & 0 & 8 & 9 & 8 & 0 & 8 & 9 & 10 & 0 & 9 & 2 & 0 & 0 & 9 & 6 & 1 & 0 & 9 & 7 & 6 & 0 & 9 & 7 & 8 \\
0 & 9 & 7 & 10 & 0 & 10 & 0 & 0 & 0 & 10 & 4 & 1 & 0 & 10 & 5 & 6 & 0 & 10 & 5 & 8 & 0 & 10 & 5 & 10 & 0 & 10 & 9 & 0 & 1 & 0 & 1 & 3 & 1 & 0 & 3 & 10 & 1 & 0 & 5 & 4 & 1 & 0 & 6 & 2 \\
1 & 0 & 8 & 7 & 1 & 0 & 9 & 5 & 1 & 1 & 1 & 7 & 1 & 1 & 1 & 9 & 1 & 1 & 3 & 3 & 1 & 1 & 4 & 1 & 1 & 1 & 7 & 4 & 1 & 1 & 8 & 2 & 1 & 1 & 10 & 3 & 1 & 1 & 10 & 7 & 1 & 1 & 10 & 9 \\
1 & 2 & 1 & 3 & 1 & 2 & 5 & 4 & 1 & 2 & 6 & 2 & 1 & 2 & 8 & 7 & 1 & 2 & 8 & 9 & 1 & 3 & 3 & 4 & 1 & 3 & 4 & 2 & 1 & 3 & 6 & 7 & 1 & 3 & 6 & 9 & 1 & 3 & 8 & 3 & 1 & 3 & 10 & 3 \\
1 & 4 & 1 & 4 & 1 & 4 & 2 & 2 & 1 & 4 & 4 & 7 & 1 & 4 & 4 & 9 & 1 & 4 & 5 & 5 & 1 & 4 & 6 & 3 & 1 & 5 & 0 & 2 & 1 & 5 & 2 & 7 & 1 & 5 & 2 & 9 & 1 & 5 & 3 & 5 & 1 & 5 & 4 & 3 \\
1 & 5 & 8 & 4 & 1 & 5 & 9 & 2 & 1 & 5 & 10 & 4 & 1 & 6 & 0 & 7 & 1 & 6 & 0 & 9 & 1 & 6 & 2 & 3 & 1 & 6 & 6 & 4 & 1 & 6 & 7 & 2 & 1 & 6 & 9 & 7 & 1 & 6 & 9 & 9 & 1 & 7 & 0 & 3 \\
1 & 7 & 1 & 1 & 1 & 7 & 4 & 4 & 1 & 7 & 5 & 2 & 1 & 7 & 7 & 7 & 1 & 7 & 7 & 9 & 1 & 7 & 9 & 3 & 1 & 8 & 2 & 4 & 1 & 8 & 3 & 2 & 1 & 8 & 5 & 7 & 1 & 8 & 5 & 9 & 1 & 8 & 7 & 3 \\
1 & 8 & 10 & 1 & 1 & 9 & 0 & 4 & 1 & 9 & 1 & 2 & 1 & 9 & 3 & 7 & 1 & 9 & 3 & 9 & 1 & 9 & 5 & 3 & 1 & 9 & 9 & 4 & 1 & 10 & 1 & 7 & 1 & 10 & 1 & 9 & 1 & 10 & 3 & 3 & 1 & 10 & 7 & 4 \\
1 & 10 & 8 & 2 & 1 & 10 & 10 & 2 & 1 & 10 & 10 & 8 & 2 & 0 & 0 & 5 & 2 & 0 & 4 & 6 & 2 & 0 & 4 & 8 & 2 & 0 & 8 & 0 & 2 & 0 & 8 & 9 & 2 & 1 & 1 & 0 & 2 & 1 & 2 & 5 & 2 & 1 & 6 & 6 \\
2 & 1 & 6 & 8 & 2 & 1 & 6 & 10 & 2 & 1 & 10 & 0 & 2 & 2 & 0 & 5 & 2 & 2 & 4 & 6 & 2 & 2 & 4 & 8 & 2 & 2 & 4 & 10 & 2 & 2 & 8 & 0 & 2 & 2 & 9 & 5 & 2 & 3 & 0 & 10 & 2 & 3 & 2 & 6 \\
2 & 3 & 2 & 8 & 2 & 3 & 2 & 10 & 2 & 3 & 6 & 0 & 2 & 3 & 7 & 5 & 2 & 3 & 10 & 1 & 2 & 4 & 0 & 6 & 2 & 4 & 0 & 8 & 2 & 4 & 4 & 0 & 2 & 4 & 9 & 6 & 2 & 4 & 9 & 8 & 2 & 4 & 9 & 10 \\
2 & 5 & 0 & 0 & 2 & 5 & 2 & 0 & 2 & 5 & 7 & 6 & 2 & 5 & 7 & 8 & 2 & 5 & 7 & 10 & 2 & 6 & 1 & 5 & 2 & 6 & 5 & 6 & 2 & 6 & 5 & 8 & 2 & 6 & 5 & 10 & 2 & 6 & 9 & 0 & 2 & 7 & 3 & 6 \\
2 & 7 & 3 & 8 & 2 & 7 & 3 & 10 & 2 & 7 & 7 & 0 & 2 & 7 & 8 & 5 & 2 & 7 & 10 & 5 & 2 & 8 & 1 & 6 & 2 & 8 & 1 & 8 & 2 & 8 & 1 & 10 & 2 & 8 & 5 & 0 & 2 & 8 & 6 & 5 & 2 & 8 & 10 & 8 \\
2 & 8 & 10 & 10 & 2 & 9 & 3 & 0 & 2 & 9 & 4 & 5 & 2 & 9 & 8 & 6 & 2 & 9 & 8 & 8 & 2 & 9 & 8 & 10 & 2 & 9 & 10 & 6 & 2 & 10 & 1 & 0 & 2 & 10 & 2 & 5 & 2 & 10 & 6 & 6 & 2 & 10 & 6 & 8 \\
2 & 10 & 6 & 10 & 2 & 10 & 10 & 0 & 3 & 0 & 0 & 7 & 3 & 0 & 2 & 3 & 3 & 0 & 3 & 1 & 3 & 0 & 4 & 10 & 3 & 0 & 6 & 4 & 3 & 0 & 7 & 2 & 3 & 0 & 9 & 6 & 3 & 1 & 0 & 2 & 3 & 1 & 0 & 9 \\
3 & 1 & 2 & 7 & 3 & 1 & 2 & 9 & 3 & 1 & 4 & 3 & 3 & 1 & 5 & 1 & 3 & 1 & 9 & 2 & 3 & 2 & 0 & 7 & 3 & 2 & 2 & 3 & 3 & 2 & 3 & 1 & 3 & 2 & 7 & 2 & 3 & 2 & 9 & 7 & 3 & 2 & 9 & 9 \\
3 & 3 & 0 & 3 & 3 & 3 & 1 & 1 & 3 & 3 & 4 & 4 & 3 & 3 & 5 & 2 & 3 & 3 & 7 & 7 & 3 & 3 & 7 & 9 & 3 & 3 & 9 & 3 & 3 & 4 & 2 & 4 & 3 & 4 & 3 & 2 & 3 & 4 & 5 & 7 & 3 & 4 & 5 & 9 \\
3 & 4 & 7 & 3 & 3 & 4 & 8 & 1 & 3 & 5 & 1 & 2 & 3 & 5 & 3 & 7 & 3 & 5 & 3 & 9 & 3 & 5 & 5 & 3 & 3 & 5 & 6 & 1 & 3 & 5 & 9 & 4 & 3 & 5 & 10 & 2 & 3 & 6 & 1 & 7 & 3 & 6 & 1 & 9 \\
3 & 6 & 3 & 3 & 3 & 6 & 4 & 1 & 3 & 6 & 8 & 2 & 3 & 6 & 10 & 7 & 3 & 6 & 10 & 9 & 3 & 7 & 1 & 3 & 3 & 7 & 2 & 1 & 3 & 7 & 6 & 2 & 3 & 7 & 8 & 7 & 3 & 7 & 8 & 9 & 3 & 7 & 10 & 3 \\
3 & 8 & 0 & 1 & 3 & 8 & 4 & 2 & 3 & 8 & 6 & 7 & 3 & 8 & 6 & 9 & 3 & 8 & 8 & 3 & 3 & 8 & 9 & 1 & 3 & 9 & 2 & 2 & 3 & 9 & 4 & 7 & 3 & 9 & 4 & 9 & 3 & 9 & 6 & 3 & 3 & 9 & 7 & 1 \\
3 & 9 & 10 & 4 & 3 & 10 & 0 & 2 & 3 & 10 & 0 & 9 & 3 & 10 & 2 & 7 & 3 & 10 & 2 & 9 & 3 & 10 & 4 & 3 & 3 & 10 & 5 & 1 & 3 & 10 & 8 & 4 & 3 & 10 & 9 & 2 & 4 & 0 & 1 & 5 & 4 & 0 & 5 & 6 \\
4 & 0 & 5 & 8 & 4 & 0 & 8 & 10 & 4 & 0 & 9 & 8 & 4 & 0 & 10 & 4 & 4 & 1 & 1 & 0 & 4 & 1 & 3 & 5 & 4 & 1 & 6 & 10 & 4 & 1 & 7 & 6 & 4 & 1 & 7 & 8 & 4 & 1 & 8 & 4 & 4 & 1 & 10 & 0 \\
4 & 2 & 1 & 5 & 4 & 2 & 4 & 10 & 4 & 2 & 5 & 6 & 4 & 2 & 5 & 8 & 4 & 2 & 6 & 4 & 4 & 2 & 8 & 0 & 4 & 2 & 10 & 5 & 4 & 3 & 2 & 10 & 4 & 3 & 3 & 6 & 4 & 3 & 3 & 8 & 4 & 3 & 6 & 0 \\
4 & 3 & 8 & 5 & 4 & 4 & 0 & 10 & 4 & 4 & 1 & 6 & 4 & 4 & 1 & 8 & 4 & 4 & 4 & 0 & 4 & 4 & 6 & 5 & 4 & 4 & 9 & 10 & 4 & 4 & 10 & 6 & 4 & 4 & 10 & 8 & 4 & 5 & 0 & 4 & 4 & 5 & 2 & 0 \\
4 & 5 & 4 & 5 & 4 & 5 & 7 & 10 & 4 & 5 & 8 & 6 & 4 & 5 & 8 & 8 & 4 & 6 & 0 & 0 & 4 & 6 & 2 & 5 & 4 & 6 & 5 & 10 & 4 & 6 & 6 & 6 & 4 & 6 & 6 & 8 & 4 & 6 & 7 & 4 & 4 & 6 & 9 & 0 \\
4 & 7 & 0 & 5 & 4 & 7 & 3 & 10 & 4 & 7 & 4 & 6 & 4 & 7 & 4 & 8 & 4 & 7 & 5 & 4 & 4 & 7 & 7 & 0 & 4 & 7 & 9 & 5 & 4 & 8 & 0 & 8 & 4 & 8 & 1 & 10 & 4 & 8 & 2 & 6 & 4 & 8 & 2 & 8 \\
4 & 8 & 3 & 4 & 4 & 8 & 5 & 0 & 4 & 8 & 7 & 5 & 4 & 8 & 10 & 10 & 4 & 9 & 0 & 6 & 4 & 9 & 1 & 4 & 4 & 9 & 3 & 0 & 4 & 9 & 5 & 5 & 4 & 9 & 8 & 10 & 4 & 9 & 9 & 6 & 4 & 9 & 9 & 8 \\
4 & 10 & 1 & 0 & 4 & 10 & 3 & 5 & 4 & 10 & 6 & 10 & 4 & 10 & 7 & 6 & 4 & 10 & 7 & 8 & 4 & 10 & 10 & 0 & 5 & 0 & 0 & 9 & 5 & 0 & 1 & 7 & 5 & 0 & 3 & 3 & 5 & 0 & 4 & 1 & 5 & 0 & 4 & 10 \\
5 & 0 & 8 & 1 & 5 & 0 & 10 & 6 & 5 & 1 & 1 & 2 & 5 & 1 & 2 & 9 & 5 & 1 & 3 & 7 & 5 & 1 & 6 & 1 & 5 & 1 & 10 & 2 & 5 & 2 & 0 & 9 & 5 & 2 & 1 & 7 & 5 & 2 & 3 & 3 & 5 & 2 & 4 & 1 \\
5 & 2 & 8 & 2 & 5 & 2 & 9 & 9 & 5 & 2 & 10 & 7 & 5 & 3 & 0 & 1 & 5 & 3 & 1 & 3 & 5 & 3 & 2 & 1 & 5 & 3 & 6 & 2 & 5 & 3 & 7 & 9 & 5 & 3 & 8 & 7 & 5 & 3 & 10 & 3 & 5 & 4 & 4 & 2 \\
5 & 4 & 5 & 9 & 5 & 4 & 6 & 7 & 5 & 4 & 8 & 3 & 5 & 4 & 9 & 1 & 5 & 5 & 0 & 2 & 5 & 5 & 2 & 2 & 5 & 5 & 3 & 9 & 5 & 5 & 4 & 7 & 5 & 5 & 7 & 1 & 5 & 6 & 0 & 7 & 5 & 6 & 1 & 9 \\
5 & 6 & 2 & 7 & 5 & 6 & 5 & 1 & 5 & 6 & 9 & 2 & 5 & 6 & 10 & 9 & 5 & 7 & 0 & 3 & 5 & 7 & 3 & 1 & 5 & 7 & 7 & 2 & 5 & 7 & 8 & 9 & 5 & 7 & 9 & 7 & 5 & 8 & 1 & 1 & 5 & 8 & 5 & 2 \\
5 & 8 & 6 & 9 & 5 & 8 & 7 & 7 & 5 & 8 & 9 & 3 & 5 & 8 & 10 & 1 & 5 & 9 & 3 & 2 & 5 & 9 & 4 & 9 & 5 & 9 & 5 & 7 & 5 & 9 & 7 & 3 & 5 & 9 & 8 & 1 & 5 & 10 & 1 & 2 & 5 & 10 & 2 & 9 \\
5 & 10 & 3 & 7 & 5 & 10 & 5 & 3 & 5 & 10 & 6 & 1 & 5 & 10 & 10 & 2 & 6 & 0 & 0 & 0 & 6 & 0 & 0 & 4 & 6 & 0 & 2 & 5 & 6 & 0 & 5 & 7 & 6 & 0 & 6 & 5 & 6 & 0 & 7 & 3 & 6 & 0 & 9 & 8 \\
6 & 0 & 9 & 10 & 6 & 1 & 2 & 0 & 6 & 1 & 4 & 5 & 6 & 1 & 5 & 3 & 6 & 1 & 7 & 8 & 6 & 1 & 7 & 10 & 6 & 1 & 8 & 6 & 6 & 1 & 9 & 4 & 6 & 2 & 0 & 5 & 6 & 2 & 2 & 5 & 6 & 2 & 5 & 8 \\
6 & 2 & 5 & 10 & 6 & 2 & 6 & 6 & 6 & 2 & 7 & 4 & 6 & 2 & 9 & 0 & 6 & 3 & 3 & 8 & 6 & 3 & 3 & 10 & 6 & 3 & 4 & 6 & 6 & 3 & 5 & 4 & 6 & 3 & 7 & 0 & 6 & 3 & 9 & 5 & 6 & 4 & 0 & 6 \\
6 & 4 & 1 & 8 & 6 & 4 & 1 & 10 & 6 & 4 & 2 & 6 & 6 & 4 & 3 & 4 & 6 & 4 & 5 & 0 & 6 & 4 & 7 & 5 & 6 & 4 & 10 & 8 & 6 & 4 & 10 & 10 & 6 & 5 & 1 & 4 & 6 & 5 & 3 & 0 & 6 & 5 & 5 & 5 \\
6 & 5 & 6 & 3 & 6 & 5 & 8 & 8 & 6 & 5 & 8 & 10 & 6 & 5 & 9 & 6 & 6 & 5 & 10 & 4 & 6 & 6 & 1 & 0 & 6 & 6 & 3 & 5 & 6 & 6 & 4 & 3 & 6 & 6 & 6 & 8 & 6 & 6 & 6 & 10 & 6 & 6 & 7 & 6 \\
6 & 6 & 8 & 4 & 6 & 6 & 10 & 0 & 6 & 7 & 1 & 5 & 6 & 7 & 2 & 3 & 6 & 7 & 4 & 8 & 6 & 7 & 4 & 10 & 6 & 7 & 5 & 6 & 6 & 7 & 6 & 4 & 6 & 7 & 8 & 0 & 6 & 7 & 10 & 5 & 6 & 8 & 2 & 8 \\
6 & 8 & 2 & 10 & 6 & 8 & 3 & 6 & 6 & 8 & 4 & 4 & 6 & 8 & 6 & 0 & 6 & 8 & 8 & 5 & 6 & 9 & 0 & 4 & 6 & 9 & 0 & 8 & 6 & 9 & 0 & 10 & 6 & 9 & 1 & 6 & 6 & 9 & 2 & 4 & 6 & 9 & 4 & 0 \\
6 & 9 & 6 & 5 & 6 & 9 & 9 & 8 & 6 & 9 & 9 & 10 & 6 & 9 & 10 & 6 & 6 & 10 & 2 & 0 & 6 & 10 & 4 & 5 & 6 & 10 & 7 & 8 & 6 & 10 & 7 & 10 & 6 & 10 & 8 & 6 & 6 & 10 & 9 & 4 & 7 & 0 & 1 & 7 \\
7 & 0 & 1 & 9 & 7 & 0 & 3 & 2 & 7 & 0 & 4 & 0 & 7 & 0 & 5 & 9 & 7 & 0 & 8 & 1 & 7 & 0 & 10 & 6 & 7 & 1 & 1 & 2 & 7 & 1 & 3 & 7 & 7 & 1 & 3 & 9 & 7 & 1 & 6 & 1 & 7 & 1 & 10 & 2 \\
7 & 2 & 0 & 0 & 7 & 2 & 1 & 7 & 7 & 2 & 1 & 9 & 7 & 2 & 3 & 3 & 7 & 2 & 4 & 1 & 7 & 2 & 8 & 2 & 7 & 2 & 10 & 7 & 7 & 2 & 10 & 9 & 7 & 3 & 1 & 3 & 7 & 3 & 2 & 1 & 7 & 3 & 6 & 2 \\
7 & 3 & 8 & 7 & 7 & 3 & 8 & 9 & 7 & 3 & 10 & 3 & 7 & 4 & 0 & 1 & 7 & 4 & 4 & 2 & 7 & 4 & 6 & 7 & 7 & 4 & 6 & 9 & 7 & 4 & 8 & 3 & 7 & 4 & 9 & 1 & 7 & 5 & 2 & 2 & 7 & 5 & 4 & 7 \\
7 & 5 & 4 & 9 & 7 & 5 & 7 & 1 & 7 & 6 & 0 & 2 & 7 & 6 & 2 & 7 & 7 & 6 & 2 & 9 & 7 & 6 & 5 & 1 & 7 & 6 & 9 & 2 & 7 & 7 & 0 & 7 & 7 & 7 & 0 & 9 & 7 & 7 & 3 & 1 & 7 & 7 & 7 & 2 \\
7 & 7 & 9 & 7 & 7 & 7 & 9 & 9 & 7 & 8 & 1 & 1 & 7 & 8 & 5 & 2 & 7 & 8 & 7 & 7 & 7 & 8 & 7 & 9 & 7 & 8 & 9 & 3 & 7 & 8 & 10 & 1 & 7 & 9 & 3 & 2 & 7 & 9 & 5 & 7 & 7 & 9 & 5 & 9 \\
7 & 9 & 7 & 3 & 7 & 9 & 8 & 1 & 7 & 10 & 1 & 2 & 7 & 10 & 3 & 7 & 7 & 10 & 3 & 9 & 7 & 10 & 5 & 3 & 7 & 10 & 6 & 1 & 7 & 10 & 10 & 2 & 8 & 0 & 2 & 0 & 8 & 0 & 2 & 4 & 8 & 0 & 6 & 5 \\
8 & 0 & 6 & 7 & 8 & 0 & 7 & 3 & 8 & 0 & 10 & 8 & 8 & 0 & 10 & 10 & 8 & 1 & 0 & 4 & 8 & 1 & 4 & 5 & 8 & 1 & 5 & 3 & 8 & 1 & 8 & 6 & 8 & 1 & 8 & 8 & 8 & 1 & 8 & 10 & 8 & 1 & 9 & 4 \\
8 & 2 & 2 & 5 & 8 & 2 & 6 & 6 & 8 & 2 & 6 & 8 & 8 & 2 & 6 & 10 & 8 & 2 & 7 & 4 & 8 & 3 & 0 & 5 & 8 & 3 & 4 & 6 & 8 & 3 & 4 & 8 & 8 & 3 & 4 & 10 & 8 & 3 & 5 & 4 & 8 & 3 & 9 & 5 \\
8 & 4 & 2 & 6 & 8 & 4 & 2 & 8 & 8 & 4 & 2 & 10 & 8 & 4 & 3 & 4 & 8 & 4 & 7 & 5 & 8 & 5 & 0 & 6 & 8 & 5 & 0 & 8 & 8 & 5 & 0 & 10 & 8 & 5 & 1 & 4 & 8 & 5 & 5 & 5 & 8 & 5 & 6 & 3 \\
8 & 5 & 9 & 6 & 8 & 5 & 9 & 8 & 8 & 5 & 9 & 10 & 8 & 5 & 10 & 4 & 8 & 6 & 3 & 5 & 8 & 6 & 4 & 3 & 8 & 6 & 7 & 6 & 8 & 6 & 7 & 8 & 8 & 6 & 7 & 10 & 8 & 6 & 8 & 4 & 8 & 7 & 1 & 5 \\
8 & 7 & 2 & 3 & 8 & 7 & 5 & 6 & 8 & 7 & 5 & 8 & 8 & 7 & 5 & 10 & 8 & 7 & 6 & 4 & 8 & 7 & 10 & 5 & 8 & 8 & 0 & 3 & 8 & 8 & 3 & 6 & 8 & 8 & 3 & 8 & 8 & 8 & 3 & 10 & 8 & 8 & 4 & 4 \\
8 & 8 & 8 & 5 & 8 & 9 & 1 & 6 & 8 & 9 & 1 & 8 & 8 & 9 & 1 & 10 & 8 & 9 & 2 & 4 & 8 & 9 & 6 & 5 & 8 & 9 & 10 & 6 & 8 & 9 & 10 & 8 & 8 & 9 & 10 & 10 & 8 & 10 & 0 & 4 & 8 & 10 & 4 & 5 \\
8 & 10 & 8 & 6 & 8 & 10 & 8 & 8 & 8 & 10 & 8 & 10 & 8 & 10 & 9 & 4 & 9 & 0 & 0 & 1 & 9 & 0 & 2 & 2 & 9 & 0 & 2 & 6 & 9 & 0 & 2 & 8 & 9 & 0 & 6 & 0 & 9 & 0 & 6 & 9 & 9 & 1 & 0 & 6 \\
9 & 1 & 4 & 0 & 9 & 1 & 4 & 7 & 9 & 1 & 4 & 9 & 9 & 1 & 9 & 1 & 9 & 2 & 2 & 0 & 9 & 2 & 2 & 7 & 9 & 2 & 2 & 9 & 9 & 2 & 3 & 2 & 9 & 2 & 7 & 1 & 9 & 3 & 0 & 0 & 9 & 3 & 0 & 7 \\
9 & 3 & 0 & 9 & 9 & 3 & 1 & 2 & 9 & 3 & 5 & 1 & 9 & 3 & 9 & 0 & 9 & 3 & 9 & 7 & 9 & 3 & 9 & 9 & 9 & 3 & 10 & 2 & 9 & 4 & 3 & 1 & 9 & 4 & 7 & 0 & 9 & 4 & 7 & 7 & 9 & 4 & 7 & 9 \\
9 & 4 & 8 & 2 & 9 & 5 & 1 & 1 & 9 & 5 & 5 & 1 & 9 & 5 & 5 & 8 & 9 & 5 & 5 & 10 & 9 & 5 & 10 & 1 & 9 & 6 & 3 & 1 & 9 & 6 & 3 & 8 & 9 & 6 & 3 & 10 & 9 & 6 & 8 & 1 & 9 & 7 & 1 & 0 \\
9 & 7 & 1 & 7 & 9 & 7 & 1 & 9 & 9 & 7 & 6 & 1 & 9 & 7 & 10 & 0 & 9 & 7 & 10 & 7 & 9 & 7 & 10 & 9 & 9 & 8 & 4 & 1 & 9 & 8 & 8 & 0 & 9 & 8 & 8 & 2 & 9 & 8 & 8 & 7 & 9 & 8 & 8 & 9 \\
9 & 9 & 0 & 1 & 9 & 9 & 2 & 1 & 9 & 9 & 6 & 0 & 9 & 9 & 6 & 2 & 9 & 9 & 6 & 7 & 9 & 9 & 6 & 9 & 9 & 10 & 4 & 0 & 9 & 10 & 4 & 2 & 9 & 10 & 4 & 7 & 9 & 10 & 4 & 9 & 9 & 10 & 8 & 1 \\
10 & 0 & 2 & 4 & 10 & 0 & 2 & 10 & 10 & 0 & 7 & 3 & 10 & 0 & 7 & 5 & 10 & 0 & 7 & 7 & 10 & 1 & 0 & 3 & 10 & 1 & 0 & 8 & 10 & 1 & 0 & 10 & 10 & 1 & 5 & 2 & 10 & 1 & 5 & 4 & 10 & 1 & 9 & 3 \\
10 & 1 & 9 & 5 & 10 & 1 & 9 & 7 & 10 & 1 & 9 & 9 & 10 & 2 & 3 & 4 & 10 & 2 & 7 & 3 & 10 & 2 & 7 & 5 & 10 & 2 & 7 & 7 & 10 & 2 & 7 & 9 & 10 & 3 & 1 & 4 & 10 & 3 & 5 & 3 & 10 & 3 & 5 & 5 \\
10 & 3 & 5 & 7 & 10 & 3 & 5 & 9 & 10 & 3 & 10 & 4 & 10 & 4 & 3 & 3 & 10 & 4 & 3 & 5 & 10 & 4 & 3 & 7 & 10 & 4 & 3 & 9 & 10 & 4 & 8 & 4 & 10 & 5 & 1 & 3 & 10 & 5 & 1 & 5 & 10 & 5 & 1 & 7 \\
10 & 5 & 1 & 9 & 10 & 5 & 6 & 3 & 10 & 5 & 6 & 5 & 10 & 5 & 10 & 3 & 10 & 5 & 10 & 7 & 10 & 5 & 10 & 9 & 10 & 6 & 4 & 3 & 10 & 6 & 4 & 5 & 10 & 6 & 8 & 3 & 10 & 6 & 8 & 5 & 10 & 6 & 8 & 7 \\
10 & 6 & 8 & 9 & 10 & 6 & 10 & 5 & 10 & 7 & 1 & 2 & 10 & 7 & 2 & 4 & 10 & 7 & 6 & 3 & 10 & 7 & 6 & 5 & 10 & 7 & 6 & 7 & 10 & 7 & 6 & 9 & 10 & 7 & 10 & 2 & 10 & 8 & 0 & 4 & 10 & 8 & 4 & 3 \\
10 & 8 & 4 & 5 & 10 & 8 & 4 & 7 & 10 & 8 & 4 & 9 & 10 & 8 & 9 & 4 & 10 & 9 & 2 & 3 & 10 & 9 & 2 & 5 & 10 & 9 & 2 & 7 & 10 & 9 & 2 & 9 & 10 & 9 & 6 & 4 & 10 & 10 & 0 & 3 & 10 & 10 & 0 & 5 \\
10 & 10 & 0 & 7 & 10 & 10 & 0 & 9 & 10 & 10 & 4 & 4 & 10 & 10 & 9 & 3 & 10 & 10 & 9 & 5 & 10 & 10 & 9 & 7 & 10 & 10 & 9 & 9 &  &  &  &  &  &  &  &  &  &  &  &  &  &  &  &  \\
\hline
\end{tabular}%
}
\caption{Independent set of size $766$ in $C_{11}^4$.}
\label{tab:c11-766}
\end{table}

\clearpage
\subsection{\texorpdfstring{$C_{15}^3$}{C15, power 3}}
\label{appendix:c15-383}

\begin{table}[H]
\centering
\footnotesize
\setlength{\tabcolsep}{2pt}
\renewcommand{\arraystretch}{0.68}
\resizebox{0.88\textwidth}{!}{%
\begin{tabular}{|*{7}{rrr|}}
\hline
8 & 0 & 0 & 3 & 1 & 0 & 14 & 1 & 0 & 7 & 2 & 0 & 2 & 3 & 0 & 13 & 3 & 0 & 4 & 4 & 0 \\
10 & 5 & 0 & 12 & 5 & 0 & 3 & 6 & 0 & 7 & 6 & 0 & 13 & 7 & 0 & 2 & 8 & 0 & 6 & 8 & 0 \\
12 & 9 & 0 & 1 & 10 & 0 & 5 & 10 & 0 & 9 & 11 & 0 & 11 & 11 & 0 & 0 & 12 & 0 & 5 & 12 & 0 \\
9 & 13 & 0 & 11 & 13 & 0 & 0 & 14 & 0 & 4 & 14 & 0 & 6 & 0 & 1 & 1 & 1 & 1 & 10 & 1 & 1 \\
12 & 1 & 1 & 5 & 2 & 1 & 0 & 3 & 1 & 9 & 3 & 1 & 11 & 3 & 1 & 6 & 4 & 1 & 14 & 5 & 1 \\
1 & 6 & 1 & 5 & 6 & 1 & 9 & 7 & 1 & 11 & 7 & 1 & 0 & 8 & 1 & 4 & 8 & 1 & 8 & 9 & 1 \\
10 & 9 & 1 & 3 & 10 & 1 & 14 & 10 & 1 & 7 & 11 & 1 & 3 & 12 & 1 & 13 & 12 & 1 & 7 & 13 & 1 \\
2 & 14 & 1 & 13 & 14 & 1 & 4 & 0 & 2 & 8 & 1 & 2 & 14 & 1 & 2 & 3 & 2 & 2 & 13 & 3 & 2 \\
2 & 4 & 2 & 4 & 4 & 2 & 8 & 5 & 2 & 10 & 5 & 2 & 12 & 5 & 2 & 3 & 6 & 2 & 7 & 7 & 2 \\
2 & 8 & 2 & 13 & 8 & 2 & 6 & 9 & 2 & 1 & 10 & 2 & 12 & 10 & 2 & 5 & 11 & 2 & 1 & 12 & 2 \\
9 & 12 & 2 & 11 & 12 & 2 & 5 & 13 & 2 & 0 & 14 & 2 & 9 & 14 & 2 & 11 & 14 & 2 & 2 & 0 & 3 \\
6 & 1 & 3 & 10 & 1 & 3 & 12 & 1 & 3 & 1 & 2 & 3 & 7 & 3 & 3 & 9 & 3 & 3 & 11 & 3 & 3 \\
0 & 4 & 3 & 6 & 5 & 3 & 1 & 6 & 3 & 14 & 6 & 3 & 5 & 7 & 3 & 0 & 8 & 3 & 9 & 8 & 3 \\
11 & 8 & 3 & 4 & 9 & 3 & 8 & 10 & 3 & 10 & 10 & 3 & 14 & 10 & 3 & 3 & 11 & 3 & 7 & 12 & 3 \\
13 & 12 & 3 & 3 & 13 & 3 & 7 & 14 & 3 & 13 & 14 & 3 & 0 & 0 & 4 & 4 & 1 & 4 & 8 & 1 & 4 \\
14 & 2 & 4 & 3 & 3 & 4 & 5 & 3 & 4 & 13 & 4 & 4 & 4 & 5 & 4 & 8 & 6 & 4 & 10 & 6 & 4 \\
12 & 6 & 4 & 3 & 7 & 4 & 7 & 8 & 4 & 13 & 8 & 4 & 2 & 9 & 4 & 6 & 10 & 4 & 12 & 10 & 4 \\
1 & 11 & 4 & 5 & 12 & 4 & 9 & 12 & 4 & 11 & 12 & 4 & 1 & 13 & 4 & 5 & 14 & 4 & 9 & 14 & 4 \\
11 & 14 & 4 & 13 & 0 & 5 & 2 & 1 & 5 & 6 & 1 & 5 & 10 & 2 & 5 & 12 & 2 & 5 & 1 & 3 & 5 \\
7 & 4 & 5 & 9 & 4 & 5 & 11 & 4 & 5 & 0 & 5 & 5 & 2 & 5 & 5 & 6 & 6 & 5 & 1 & 7 & 5 \\
5 & 8 & 5 & 9 & 8 & 5 & 11 & 8 & 5 & 0 & 9 & 5 & 4 & 10 & 5 & 8 & 10 & 5 & 10 & 10 & 5 \\
14 & 11 & 5 & 3 & 12 & 5 & 7 & 12 & 5 & 14 & 13 & 5 & 3 & 14 & 5 & 7 & 14 & 5 & 0 & 0 & 6 \\
9 & 0 & 6 & 11 & 0 & 6 & 4 & 1 & 6 & 8 & 2 & 6 & 14 & 2 & 6 & 3 & 3 & 6 & 5 & 3 & 6 \\
13 & 4 & 6 & 4 & 5 & 6 & 8 & 6 & 6 & 10 & 6 & 6 & 12 & 6 & 6 & 3 & 7 & 6 & 14 & 7 & 6 \\
7 & 8 & 6 & 2 & 9 & 6 & 13 & 9 & 6 & 6 & 10 & 6 & 1 & 11 & 6 & 12 & 11 & 6 & 5 & 12 & 6 \\
1 & 13 & 6 & 10 & 13 & 6 & 12 & 13 & 6 & 5 & 14 & 6 & 7 & 0 & 7 & 13 & 0 & 7 & 2 & 1 & 7 \\
10 & 2 & 7 & 12 & 2 & 7 & 1 & 3 & 7 & 7 & 4 & 7 & 9 & 4 & 7 & 11 & 4 & 7 & 0 & 5 & 7 \\
2 & 5 & 7 & 6 & 6 & 7 & 1 & 7 & 7 & 5 & 8 & 7 & 0 & 9 & 7 & 9 & 9 & 7 & 11 & 9 & 7 \\
4 & 10 & 7 & 8 & 11 & 7 & 10 & 11 & 7 & 14 & 11 & 7 & 3 & 12 & 7 & 8 & 13 & 7 & 14 & 13 & 7 \\
3 & 14 & 7 & 5 & 0 & 8 & 9 & 0 & 8 & 11 & 0 & 8 & 0 & 1 & 8 & 4 & 2 & 8 & 6 & 2 & 8 \\
8 & 2 & 8 & 14 & 3 & 8 & 5 & 4 & 8 & 13 & 5 & 8 & 4 & 6 & 8 & 8 & 7 & 8 & 10 & 7 & 8 \\
12 & 7 & 8 & 14 & 7 & 8 & 3 & 8 & 8 & 7 & 9 & 8 & 13 & 9 & 8 & 2 & 10 & 8 & 6 & 11 & 8 \\
12 & 11 & 8 & 1 & 12 & 8 & 6 & 13 & 8 & 10 & 13 & 8 & 12 & 13 & 8 & 1 & 14 & 8 & 3 & 0 & 9 \\
7 & 0 & 9 & 13 & 1 & 9 & 2 & 2 & 9 & 10 & 3 & 9 & 12 & 3 & 9 & 1 & 4 & 9 & 3 & 4 & 9 \\
7 & 5 & 9 & 9 & 5 & 9 & 11 & 5 & 9 & 2 & 6 & 9 & 6 & 7 & 9 & 1 & 8 & 9 & 5 & 9 & 9 \\
9 & 9 & 9 & 11 & 9 & 9 & 0 & 10 & 9 & 4 & 11 & 9 & 8 & 11 & 9 & 10 & 11 & 9 & 14 & 12 & 9 \\
4 & 13 & 9 & 8 & 13 & 9 & 14 & 14 & 9 & 1 & 0 & 10 & 5 & 0 & 10 & 9 & 1 & 10 & 11 & 1 & 10 \\
0 & 2 & 10 & 4 & 2 & 10 & 6 & 3 & 10 & 8 & 3 & 10 & 14 & 4 & 10 & 5 & 5 & 10 & 0 & 6 & 10 \\
13 & 6 & 10 & 4 & 7 & 10 & 8 & 7 & 10 & 10 & 7 & 10 & 14 & 8 & 10 & 3 & 9 & 10 & 7 & 9 & 10 \\
13 & 10 & 10 & 2 & 11 & 10 & 6 & 11 & 10 & 12 & 12 & 10 & 2 & 13 & 10 & 6 & 13 & 10 & 10 & 14 & 10 \\
12 & 14 & 10 & 3 & 0 & 11 & 7 & 1 & 11 & 13 & 1 & 11 & 2 & 2 & 11 & 10 & 3 & 11 & 12 & 3 & 11 \\
1 & 4 & 11 & 3 & 4 & 11 & 7 & 5 & 11 & 9 & 5 & 11 & 11 & 5 & 11 & 2 & 6 & 11 & 6 & 7 & 11 \\
1 & 8 & 11 & 12 & 8 & 11 & 5 & 9 & 11 & 0 & 10 & 11 & 9 & 10 & 11 & 11 & 10 & 11 & 4 & 11 & 11 \\
0 & 12 & 11 & 8 & 12 & 11 & 10 & 12 & 11 & 4 & 13 & 11 & 8 & 14 & 11 & 14 & 14 & 11 & 1 & 0 & 12 \\
5 & 1 & 12 & 9 & 1 & 12 & 11 & 1 & 12 & 0 & 2 & 12 & 6 & 3 & 12 & 8 & 3 & 12 & 14 & 4 & 12 \\
5 & 5 & 12 & 0 & 6 & 12 & 13 & 6 & 12 & 4 & 7 & 12 & 8 & 8 & 12 & 10 & 8 & 12 & 14 & 8 & 12 \\
3 & 9 & 12 & 7 & 10 & 12 & 13 & 10 & 12 & 2 & 11 & 12 & 6 & 12 & 12 & 13 & 12 & 12 & 2 & 13 & 12 \\
6 & 14 & 12 & 10 & 14 & 12 & 12 & 14 & 12 & 14 & 0 & 13 & 3 & 1 & 13 & 7 & 1 & 13 & 13 & 2 & 13 \\
2 & 3 & 13 & 4 & 3 & 13 & 10 & 4 & 13 & 12 & 4 & 13 & 3 & 5 & 13 & 7 & 6 & 13 & 9 & 6 & 13 \\
11 & 6 & 13 & 2 & 7 & 13 & 6 & 8 & 13 & 12 & 8 & 13 & 1 & 9 & 13 & 5 & 10 & 13 & 9 & 10 & 13 \\
11 & 10 & 13 & 0 & 11 & 13 & 4 & 12 & 13 & 9 & 12 & 13 & 11 & 12 & 13 & 0 & 13 & 13 & 4 & 14 & 13 \\
8 & 14 & 13 & 10 & 0 & 14 & 12 & 0 & 14 & 1 & 1 & 14 & 5 & 1 & 14 & 9 & 2 & 14 & 11 & 2 & 14 \\
0 & 3 & 14 & 6 & 4 & 14 & 8 & 4 & 14 & 1 & 5 & 14 & 14 & 5 & 14 & 5 & 6 & 14 & 0 & 7 & 14 \\
4 & 8 & 14 & 8 & 8 & 14 & 10 & 8 & 14 & 14 & 9 & 14 & 3 & 10 & 14 & 7 & 10 & 14 & 13 & 11 & 14 \\
2 & 12 & 14 & 7 & 12 & 14 & 13 & 13 & 14 & 2 & 14 & 14 & 6 & 14 & 14 &  &  &  &  &  &  \\
\hline
\end{tabular}%
}
\caption{Independent set of size $383$ in $C_{15}^{3}$.}
\label{tab:c15-383}
\end{table}
\end{document}